\newtheorem{theorem}{Theorem}
\newtheorem{lemma}{Lemma}
\newtheorem{corollary}{Corollary}
\newtheorem{fact}{Fact}
\newtheorem{definition}{Definition}
\begin{document}
\title{Scheduling for Optimal Rate Allocation in Ad Hoc Networks With Heterogeneous Delay Constraints}
\author{
	Juan Jos\'e Jaramillo,~\IEEEmembership{Member,~IEEE,} R. Srikant,~\IEEEmembership{Fellow,~IEEE,} and Lei Ying,~\IEEEmembership{Member,~IEEE}
	\thanks{Research supported by NSF Grants 07-21286, 05-19691, 03-25673, 08-31756, 09-53165, ARO MURI Subcontracts, AFOSR Grant FA-9550-08-1-0432, and DTRA Grants HDTRA1-08-1-0016, HDTRA1-09-1-0055.}
	\thanks{J. J. Jaramillo and L. Ying are with the Department of Electrical and Computer Engineering, Iowa State University, Ames, IA 50011 USA. Email: \{jjjarami, leiying\}@iastate.edu}
	\thanks{R. Srikant is with the Department of Electrical and Computer Engineering, and the Coordinated Science Laboratory, University of Illinois, Urbana, IL 61801 USA. Email: rsrikant@illinois.edu}
}
\maketitle

%
%
\begin{abstract}

This paper studies the problem of scheduling in single-hop wireless networks with real-time traffic, where every packet arrival has an associated deadline and a minimum fraction of packets must be transmitted before the end of the deadline. Using optimization and stochastic network theory we propose a framework to model the quality of service (QoS) requirements under delay constraints. The model allows for fairly general arrival models with heterogeneous constraints. The framework results in an optimal scheduling algorithm which fairly allocates data rates to all flows while meeting long-term delay demands. We also prove that under a simplified scenario our solution translates into a greedy strategy that makes optimal decisions with low complexity.

\end{abstract}

%
%
\begin{IEEEkeywords}
Wireless networks, ad hoc networks, quality of service, scheduling, real-time traffic.
\end{IEEEkeywords}

%
%
\section{Introduction}

\IEEEPARstart{I}{n} this paper we study the problem of scheduling real-time traffic in ad hoc networks under maximum per-packet delay constraints. The problem of scheduling best-effort traffic, which is defined as traffic that does not have any kind of quality of service (QoS) requirements such as minimum bandwidth or maximum delay, has been extensively studied for the case of wireless networks. An optimization framework for resource allocation in wireless networks has been developed in \cite{Eryilmaz05, Lin04, Neely05, Stolyar05, Eryilmaz06, Chen06}, where a dual decomposition approach was used to derive various components of the resource allocation architecture such as scheduling, congestion control, routing, power control, etc. A striking feature of the solution is an alternative derivation of the maxweight algorithm proposed in \cite{Tassiulas92}. We refer the readers to \cite{Lin06, Georgiadis06} for a survey of these works.

Scheduling algorithms for packets with strict deadline requirements have been proposed in \cite{Shakkottai02, Raghunathan08, Dua07, Liu06}, but the solutions are only approximate. In \cite{Hou09a, Hou09b, Hou09c}, the problem of optimal admission control and scheduling for real-time traffic was addressed for access-point wireless networks in which only one link can transmit at any given time. Among the many contributions in these papers is a key modeling innovation whereby the network is studied in frames, where a frame is a contiguous set of time-slots of fixed duration. Packets with deadlines are assumed to arrive only at the beginning of a frame and have to be served before the end of the frame according to some specified deadlines.	

The problem of optimal congestion control and scheduling for general ad hoc networks and arrivals was studied in \cite{Jaramillo10}, using the modeling paradigm of frames proposed in \cite{Hou09a}. The model allows a common framework for handling both best-effort and real-time traffic simultaneously, but it can only handle homogeneous per-packet delay requirements.

In this paper, we further extend the results of \cite{Jaramillo10} for the case of heterogeneous delays and more general arrival models, and study the impact of acknowledgments on scheduling in a lossy channel.

The main contributions of this work are summarized as follows:

\begin{enumerate}
  \item We present an optimization formulation for the problem of scheduling real-time traffic under maximum per-packet delay constraints in wireless ad hoc networks. Unlike earlier work, the formulation allows for general arrival models with heterogeneous delays.
  \item Using duality theory and a decomposition approach, we  design an optimal scheduler that fairly allocates data rates to all links and ensures that a required fraction of each flow's packets are delivered before the prescribed deadline by appealing to connections between Lagrange multipliers and service deficits.
  \item We prove that the scheduler meets all the QoS requirements and that converges to the optimal solution.
  \item We then consider noisy channels where the transmitter does not have perfect channel state information and relies on feedback (acknowledgment) from the receiver to find out if a transmission was successful. We derive the structure of optimal scheduling algorithms in this case and show that in the special case of colocated networks, the scheduling algorithm reduces to a simple greedy algorithm. The last part recovers a key result in \cite{Hou09a} using a different approach.
\end{enumerate}

The paper is organized as follows. Section \ref{network_model} presents the network model we use in this work. The optimization formulation is presented in Section \ref{formulation_kc} for the simplest of the channel models we study, while the dual decomposition approach is developed in Section \ref{decomposition_kc}. The optimal scheduler and its convergence properties are presented in Section \ref{online_alg_kc}. Since we study three different channel models, in Sections \ref{unknown_channel_case_pf} and \ref{unknown_channel_case_ps} we highlight the differences between the simplest channel model and the other two, and show the relationship between feedback after every transmission and algorithm complexity. In Section \ref{simulations} we perform a simulation study to understand the rates that can be achieved under different channel models. Finally, in Section \ref{conclusions} we present the conclusions.

%
%
\section{Network Model}
\label{network_model}

In this section we present our model for a network composed of single-hop traffic flows, such that each packet has maximum delay requirements.

We represent the network using a directed graph $\mathcal{G}=( \mathcal{N},\mathcal{L} )$, where $\mathcal{N}$ is the set of nodes and $\mathcal{L}$ is the set of links, such that for any $n_1, n_2 \in \mathcal{N}$, if $(n_1, n_2) \in \mathcal{L}$ then node $n_1$ can communicate to node $n_2$. Links are numbered $1$ through $|\mathcal{L}|$, and by abusing notation, we will sometimes use $l \in \cal L$ to mean $l\in\{1,2,\ldots, |\mathcal{L}|\}$.

Time is assumed to be divided in slots, and a set of $T$ consecutive slots is called a \emph{frame}. Let $\mathcal{T} \stackrel{def}{=} \{ 1,\ldots,T \}$. We denote by $a = ( a_{lt} )_{l \in \mathcal{L}, t \in \mathcal{T}}$ the number of packet arrivals at a given frame for link $l$ at time slot $t$, and assume that we get to know $a$ at the beginning of the frame. Furthermore, assume that $a_l \stackrel{def}{=} \sum_{t \in \mathcal{T}} a_{lt} $ is a random variable with mean $\lambda_l$ and variance $\sigma^2_{l}$, such that $Pr(a_{l}=0)>0$ and $Pr(a_{l}=1)>0$. The last two assumptions are used to guarantee that the Markov chain to be defined later is both irreducible and aperiodic, but they can be replaced by similar assumptions. We further assume that arrivals are independent between different frames.

Define $\mathcal{T}^a_l \stackrel{def}{=} \{ t: t \in \mathcal{T} \mbox{ and } a_{lt}>0 \}$ to be the set of arrival times at link $l$. Let $\tau = ( \tau_{lt} )_{l \in \mathcal{L}, t \in \mathcal{T}^a_l }$ be the deadline associated with packet arrivals. That is, a packet that arrived at link $l$, time $t$, must be transmitted by the end of time slot $\tau_{lt}$. We assume that the deadlines are such that
\begin{equation*}
	\{ t_1, \ldots, \tau_{lt_1} \} \cap \{ t_2, \ldots, \tau_{lt_2} \} = \emptyset \mbox{ for all } t_1, t_2 \in \mathcal{T}^a_l
\end{equation*}
and
\begin{equation*}
	\tau_{lt} \leq T \mbox{ for all } l \in \mathcal{L}, t \in \mathcal{T}^a_l.
\end{equation*}
In other words, packets must be transmitted before the next set of arrivals occurs in subsequent time slots, and all packets must be transmitted before the end of the frame.

If a packet misses its deadline it is discarded, and it is required that the loss probability at link $l \in \mathcal{L}$ due to deadline expiry must be no more than $p_l$. To avoid unnecessary complexity in the formulation, we will write $a$ to denote both the number of packet arrivals and its associated deadlines $\tau$.

This paper studies the cases where the channel state is assumed to be constant for the duration of a frame and when it is allowed to change from time slot to time slot. In both cases we assume the state is independent between frames and independent of arrivals. When the channel is fixed in a frame, let $c=( c_l )_{l \in \mathcal{L}}$ denote the number of packets link $l$ can successfully transmit in a time slot. When the channel is allowed to change, define $c=( c_{lt} )_{l \in \mathcal{L}, t \in \mathcal{T}}$ to be the number of packets that can be successfully transmitted at link $l$ in time slot $t$.

If we get to know the channel state before transmission, we can determine the optimal rate at which we can successfully transmit, possibly allowing more than one packet to be transmitted in a single slot. On the other hand, if the channel state is not known, we can only determine whether a transmission was successful or not after we get some feedback from the receiver. In this paper we try to capture these different scenarios in the following cases:
\begin{enumerate}
	\item Known channel state: It is assumed that $c_l$ is a \emph{non-negative} random variable with mean $\bar{c}_l$ and variance $\sigma^2_{cl}$, and we get to know the channel state at the \emph{beginning} of the frame.
	\item Unknown channel state, per-frame feedback: It is assumed that $c_l$ is a \emph{Bernoulli} random variable with mean $\bar{c}_l$ and we only get to know the channel state at the \emph{end} of the frame. In other words, the receiver acknowledges receptions for the all the packets in the frame, at the end of the frame.
	\item Unknown channel state, per-slot feedback: It is assumed that $c_{lt}$ is a \emph{Bernoulli} random variable with mean $\bar{c}_l$ and we get to know the channel state at the \emph{end} of the time slot. In other words, acknowledgments are received after each transmission.
\end{enumerate}

In the known channel state case we can potentially send more than one packet in a time slot at higher rates since channel estimation allow us to determine the optimal transmission rate. This is the reason why we make no assumptions on the values $c_l$ can take since it will be determined by the particular wireless technology used. In the case where the channel is unknown before transmission we assume that we only get binary feedback in the form of acknowledgments, which is reflected in the Bernoulli assumption on $c_l$ and $c_{lt}$. Thus, in this case, without any loss of generality, we assume only one packet can be transmitted per time slot per link.

For the sake of simplicity in the presentation, we will first develop the known channel case and we will later highlight the differences between this case and the other two cases in Sections \ref{unknown_channel_case_pf} and \ref{unknown_channel_case_ps}.

%
%
\section{Static Problem Formulation}
\label{formulation_kc}

To design our algorithm, we will first formulate the problem as a static optimization problem. Using duality theory, we will then obtain a dynamic solution to this problem and later we will prove its properties using stochastic Lyapunov techniques.

Let $s = ( s_{lt} )_{l \in \mathcal{L}, t \in \mathcal{T} }$ denote the number of packets scheduled for transmission at link $l$ and time slot $t$. We will only focus on feasible schedules, so if $s_{l_1t} > 0$ and $s_{l_2t} > 0$ for any $t$, then links $l_1$ and $l_2$ can be scheduled to simultaneously transmit without interfering with each other.

Since we cannot transmit more packets than what are available and what the channel state allows, we have the following constraints when the arrivals are given by $a$ and the channel state is $c$:
\begin{equation}
\label{arrival_constraint1_kc}
	\sum_{ j=t }^{ \tau_{lt} } s_{lj} \leq a_{lt} \mbox{ for all } t \in \mathcal{T}^a_l \mbox{, } l \in \mathcal{L} \mbox{,}
\end{equation}
\begin{equation}
\label{arrival_constraint2_kc}
	s_{lt} = 0 \mbox{ for all } t \in \mathcal{T} \setminus\cup_{ t \in \mathcal{T}^a_l } \{ t, \ldots, \tau_{lt} \} \mbox{, } l \in \mathcal{L} \mbox{, and }
\end{equation}
\begin{equation}
\label{channel_constraint_kc}
	s_{lt} \leq c_{l} \mbox{ for all } l \in \mathcal{L} \mbox{ and } t \in \mathcal{T}.
\end{equation}

Denote the set of feasible schedules when the arrivals and channel state are $a$ and $c$ by $\mathcal{S}(a, c)$, capturing any interference constraints imposed by the network and satisfying \eqref{arrival_constraint1_kc}, \eqref{arrival_constraint2_kc}, and \eqref{channel_constraint_kc}.

Our goal is to find $Pr( s | a, c)$ which is the probability of using schedule $s \in \mathcal{S}(a, c)$ when the arrivals are given by $a$ and the channel state is $c$, subject to the constraint that the loss probability at link $l \in \mathcal{L}$ due to deadline expiry cannot exceed $p_l$.

Denoting by $\mu(a,c) = ( \mu_{l}(a,c) )_{ l \in \mathcal{L} }$ the expected number of packets served when the arrivals and channel state are given by $a$ and $c$, respectively, we have:
\begin{equation*}
	\mu_{l}(a,c) \leq \sum\limits_{s \in \mathcal{S}(a, c)} \sum_{ t \in \mathcal{T} } s_{lt} Pr(s | a, c) \mbox{ for all } l \in \mathcal{L}.
\end{equation*}
Thus, the expected service at link $l \in \mathcal{L}$ is given by
\begin{equation*}
	\mu_{l} \stackrel{def}{=} \sum\limits_{a,c} \mu_{l}(a,c) Pr(c) Pr(a).
\end{equation*}
Due to QoS constraints we need at all links
\begin{equation*}
	\mu_{l} \geq \lambda_l(1-p_l),
\end{equation*}
and to avoid trivialities, we assume that $\lambda_l(1-p_l) > 0$ for all $l \in \mathcal{L}$.

For notational simplicity, define the capacity region for fixed arrivals and channel state as
\begin{equation*}
	\mathcal{C}(a,c) \stackrel{def}{=}
	\left\{
	\begin{array}{l}
		( \bar{\mu}_{l} )_{l \in \mathcal{L}}: \mbox{there exists } \bar{s} \in \mathcal{S}(a,c)_\mathcal{CH} \mbox{,} \\
		\bar{\mu}_{l} \leq \sum_{t \in \mathcal{T} } \bar{s}_{l,t}
	\end{array}
	\right\},
\end{equation*}
where $\mathcal{S}(a,c)_\mathcal{CH}$ is the convex hull of $\mathcal{S}(a,c)$.

Thus, the overall capacity region can be defined as follows:
\begin{equation*}
	\mathcal{C} \stackrel{def}{=}
	\left\{
	\begin{array}{l}
		( \mu_{l} )_{l \in \mathcal{L}} : \mbox{there exists } ( \bar{\mu}_{l}(a,c) )_{l \in \mathcal{L}} \in \mathcal{C}(a,c) \\
		\mbox{ for all } a, c \mbox{ and } \mu_{l} = E[ \bar{\mu}_{l}(a,c) ] \mbox{ for all } l \in \mathcal{L} \\
	\end{array}
	\right\}.
\end{equation*}

We will focus on the following static formulation for our problem, for some given vector $w \in \mathbb{R}_+^{|\mathcal{L}|}$:
\begin{equation}
\label{offline_opt_kc}
	 \max\limits_{ \mu \in \mathcal{C} } \sum\limits_{l \in \mathcal{L}} w_l \mu_{l}
\end{equation}
subject to
\begin{equation*}
	\mu_{l} \geq \lambda_{l}(1-p_l) \mbox{ for all } l \in \mathcal{L}.
\end{equation*}

The vector $w$ can be used to allocate additional bandwidth fairly to flows beyond what is required to meet their QoS needs. Other uses for $w$ have been explored in \cite{Jaramillo10}. We will assume that the arrivals and loss probability requirements are feasible and thus the optimization problem has a solution $\mu^*$.

%
%
\section{Dual Decomposition of the Static Problem}
\label{decomposition_kc}

In this section we use duality theory to decompose the static optimization problem into simpler subproblems that will give us the ideas behind the dynamic algorithm.

Using the definition of the dual function\cite{Luenberger03}, we have that
\begin{equation*}
	D(\delta) = \max\limits_{ \mu \in \mathcal{C} } \sum\limits_{l \in \mathcal{L}} w_l \mu_{l} - \delta_{l} [\lambda_l(1-p_l) - \mu_{l}].
\end{equation*}
From Slater's condition \cite{Boyd04} we know that the duality gap is zero and therefore $D(\delta^*) = \sum\limits_{l \in \mathcal{L}} w_l \mu_{l}^*$, where
\begin{equation*}
	\delta^* \in \mathop{\arg\min}_{\delta_{l} \geq 0} D(\delta).
\end{equation*}

We are interested in finding $\mu^*$ but not the value $D(\delta^*)$, so the problem can be simplified as follows
\begin{equation}
\label{first_decomposition_kc}
	\max\limits_{ \mu \in \mathcal{C} } \sum\limits_{l \in \mathcal{L}} ( w_l + \delta_{l} ) \mu_{l}.
\end{equation}
Since we are interested in solving the problem for non-negative values of $\delta_{l}$, it must be the case that $\mu^*$ is as large as the constraints allow. Furthermore, since the objective function in (\ref{first_decomposition_kc}) is linear, the problem can be decomposed into the following subproblems for fixed $a$ and $c$:
\begin{equation*}
	\max\limits_{s \in \mathcal{S}(a,c)} \sum\limits_{l \in \mathcal{L}} (w_l+\delta_{l}) \sum_{ t\in \mathcal{T} } s_{lt}.
\end{equation*}

The optimization problem can be solved using the following iterative algorithm, where $k$ is the step index:
\begin{align*}
	\tilde{s}^*(a,c,k) \in \mathop{\arg\max}\limits_{s \in \mathcal{S}(a,c)} \sum\limits_{l \in \mathcal{L}} [w_l+\delta_{l}(k)] \sum_{ t\in \mathcal{T} } s_{lt}
\end{align*}
\begin{equation*}
	\tilde{\mu}_{l}^*(k) = \sum\limits_{ a,c } \sum_{ t\in \mathcal{T} } \tilde{s}_{lt}^*(a, c, k) Pr(c) Pr(a).
\end{equation*}
And the update equation for the Lagrange multipliers is given by
\begin{equation*}
	\delta_{l}(k+1) = \{ \delta_{l}(k) + \epsilon [ \lambda_l (1-p_l) - \tilde{\mu}_{l}^*(k) ] \}^+,
\end{equation*}
where $\epsilon>0$ is a fixed step-size parameter, and for any $\alpha \in \mathbb{R}$, $\alpha^+ \stackrel{def}{=}\max\{ \alpha, 0 \}$.

With the change of variables $\epsilon \hat{d}(k) = \delta(k)$, we rewrite the algorithm as
\begin{equation}
\label{decomposition_algorithm_kc}
	\tilde{s}^*(a,c,k) \in \mathop{\arg\max}\limits_{s \in \mathcal{S}(a,c)} \sum\limits_{l \in \mathcal{L}} [ \frac{1}{\epsilon} w_l+\hat{d}_{l}(k) ] \sum_{ t\in \mathcal{T} } s_{lt}
\end{equation}
\begin{equation*}
	\tilde{\mu}_{l}^*(k) = \sum\limits_{ a,c } \sum_{ t\in \mathcal{T} } \tilde{s}_{lt}^*(a, c, k) Pr(c) Pr(a),
\end{equation*}
with update equation:
\begin{equation*}
	\hat{d}_{l}(k+1) = [ \hat{d}_{l}(k) +\lambda_l (1-p_l) - \tilde{\mu}_{l}^*(k) ] ]^+.
\end{equation*}
From the update equation we see that $\hat{d}_l(k)$ can be interpreted as a queue with $\lambda_l(1-p_l)$ arrivals and $\tilde{\mu}_{l}^*(k)$ departures at step $k$.

%
%
\section{Online Algorithm and Its Convergence Analysis}
\label{online_alg_kc}

So far we have presented a dual decomposition for a static problem; however, the real network has stochastic arrivals and channel state. We will use the intuition from the decomposition in this section to develop an online algorithm that can cope with such changing conditions and prove its convergence properties.

\subsection{Scheduler}
\label{online_kc}

We propose the following dynamic scheduling algorithm, where the arrivals and channel state  in frame $k$ are given by $a(k)$ and $c(k)$, respectively:
\begin{equation}
\label{online_opt_sch_kc}
	\tilde{s}^*( a(k),c(k),d(k) ) \in \mathop{\arg\max}\limits_{s \in \mathcal{S}( a(k),c(k) )} \sum\limits_{l \in \mathcal{L}} [ \frac{1}{\epsilon} w_l+d_{l}(k) ] \sum_{ t \in \mathcal{T} } s_{lt},
\end{equation}
with update equation
\begin{equation*}
	d_{l}(k+1) = [ d_{l}(k) + \tilde{a}_{l}(k) - I_{l}^*( a(k),c(k),d(k) ) ]^+,
\end{equation*}
where
\begin{equation*}
	I_{l}^*( a(k),c(k),d(k) ) = \sum_{ t\in \mathcal{T} } \tilde{s}_{lt}^*( a(k),c(k),d(k) )
\end{equation*}
and $\tilde{a}_{l}(k)$ is a binomial random variable with parameters $a_{l}(k)$ and $1-p_l$. The quantity $\tilde{a}_{l}(k)$ can be generated by the network as follows: upon each packet arrival, toss a coin with probability of \emph{heads} equal to $1-p_l$, and if the outcome is \emph{heads}, add a one to the deficit counter $d_l(k)$. This implementation for $\tilde{a}_{l}(k)$ was first suggested in \cite{Jaramillo10}.

Note that in \eqref{online_opt_sch_kc} we make explicit the fact that the optimal scheduler is a function of  $a(k)$, $c(k)$, and $d(k)$, for fixed $w$ and $\epsilon$. Also note that $d_l(k)$ can be interpreted as a virtual queue that keeps track of the deficit in service for link $l$ to achieve a loss probability due to deadline expiry less than or equal to $p_l$. The idea of using a deficit counter was first used in \cite{Hou09a} for the case of colocated networks with homogeneous delays, while the Lagrange multiplier interpretation allowed us to extend the result to general ad hoc networks and heterogeneous delays.

\subsection{Convergence Results}
\label{convergence_kc}

We now prove that the online algorithm meets the QoS constraints, the total expected service deficit has a $O(1 / \epsilon)$ bound, and the expected value of the objective is within $O(\epsilon)$ of the optimal value of the static problem (\ref{offline_opt_kc}).
For the sake of readability, we defer the proofs to the appendixes.

We will first bound the expected drift of $d(k)$ for a suitable Lyapunov function. Note that $d(k)$ defines an irreducible and aperiodic Markov chain.

\begin{lemma}
\label{expected_drift_kc}
Consider the Lyapunov function $V(d)=\frac{1}{2}\sum_{l \in \mathcal{L}}d_l^2$. If there exists a point $\mu(\Delta) \in \mathcal{C}/(1+\Delta)$ for some $\Delta > 0$ such that
\begin{equation}
\label{inelastic_feasibility}
	\mu_{l}(\Delta) \geq \lambda_{l}(1-p_l) \mbox{ for all } l \in \mathcal{L}
\end{equation}
then
\begin{align*}
	E & \left[ V(d(k+1)) | d(k)=d \right] - V(d) \leq B_1 - B_2 \sum_{l \in \mathcal{L}} d_l \\
 	& - \frac{1}{\epsilon} \sum_{l \in \mathcal{L}} \left\{ w_l (1+\Delta) \mu_{l}(\Delta) - w_l E \left[ I_{l}^*(a(k),c(k),d) \right] \right\} \\
\end{align*}
for some positive constants $B_1$, $B_2$, any $\epsilon>0$, where $I^*(a(k),c(k),d)$ is obtained from the solution to (\ref{online_opt_sch_kc}).
$\hfill \diamond$
\end{lemma}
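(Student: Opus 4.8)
The plan is to compute the one-step drift of $V(d)=\frac{1}{2}\sum_l d_l^2$ directly from the update equation $d_l(k+1) = [d_l(k) + \tilde{a}_l(k) - I_l^*(a(k),c(k),d)]^+$ and then bound the cross term using the optimality property of the scheduler. First I would use the standard inequality $([x]^+)^2 \leq x^2$ (or rather $([d_l + \tilde{a}_l - I_l^*]^+)^2 \leq (d_l + \tilde{a}_l - I_l^*)^2$) to write
\begin{equation*}
	V(d(k+1)) - V(d) \leq \sum_{l \in \mathcal{L}} d_l (\tilde{a}_l - I_l^*) + \frac{1}{2}\sum_{l \in \mathcal{L}} (\tilde{a}_l - I_l^*)^2.
\end{equation*}
The second sum is bounded by a constant $B_1$ because the arrivals and the service $I_l^*$ in a frame are both finite (arrivals have finite mean and variance, and the channel and schedule constraints cap the service per frame), so taking conditional expectations isolates the first term as the quantity to control.

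Next I would take $E[\,\cdot\mid d(k)=d]$ and recall that $E[\tilde{a}_l(k)] = \lambda_l(1-p_l)$, since $\tilde{a}_l(k)$ is binomial with parameters $a_l(k)$ and $1-p_l$ and $E[a_l(k)] = \lambda_l$. This gives a term $\sum_l d_l[\lambda_l(1-p_l) - E(I_l^*)]$, which is precisely the service-deficit expression the dual algorithm was designed around. The crucial move is to invoke the optimality of $\tilde{s}^*$ in \eqref{online_opt_sch_kc}: for the fixed deficit vector $d$, the scheduler maximizes $\sum_l [\frac{1}{\epsilon}w_l + d_l]\sum_t s_{lt}$ over $s \in \mathcal{S}(a(k),c(k))$. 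Therefore, comparing against the (possibly randomized) schedule that achieves the feasible interior point $\mu(\Delta) \in \mathcal{C}/(1+\Delta)$, averaged over $a$ and $c$, I get
\begin{equation*}
	\sum_{l \in \mathcal{L}} \left[\tfrac{1}{\epsilon}w_l + d_l\right] E(I_l^*) \;\geq\; \sum_{l \in \mathcal{L}} \left[\tfrac{1}{\epsilon}w_l + d_l\right] (1+\Delta)\mu_l(\Delta).
\end{equation*}

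Rearranging this inequality separates the $d_l$-weighted part from the $w_l$-weighted part. The $d_l$ part yields $\sum_l d_l[(1+\Delta)\mu_l(\Delta) - E(I_l^*)] \leq 0$, and combining with $(1+\Delta)\mu_l(\Delta) \geq \mu_l(\Delta) + \Delta\mu_l(\Delta) \geq \lambda_l(1-p_l) + \Delta\mu_l(\Delta)$ from \eqref{inelastic_feasibility} produces the negative drift term $-B_2\sum_l d_l$ with $B_2$ proportional to $\Delta\,\min_l \mu_l(\Delta)$ (using $\lambda_l(1-p_l) > 0$ to ensure $B_2 > 0$). The leftover $\frac{1}{\epsilon}w_l$ part contributes exactly the term $-\frac{1}{\epsilon}\sum_l\{w_l(1+\Delta)\mu_l(\Delta) - w_l E(I_l^*)\}$ appearing in the statement. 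The main obstacle I anticipate is bookkeeping the optimality comparison correctly: the interior point $\mu(\Delta)$ is defined through the overall capacity region $\mathcal{C}$ as an expectation over $a,c$, so I must justify that there exists a per-realization schedule (in the convex hull $\mathcal{S}(a,c)_\mathcal{CH}$) whose expectation realizes $(1+\Delta)\mu(\Delta)$ and that the maxweight scheduler \eqref{online_opt_sch_kc} dominates it slot-by-slot after taking expectations. Handling the convex-hull/randomization argument cleanly, rather than the elementary algebra, is where the care is needed.
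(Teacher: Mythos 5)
Your plan follows the paper's own proof almost step for step: the $([x]^+)^2\le x^2$ expansion, the constant $B_1$ (which in the paper comes from the schedule constraint $I_{l}^*\le a_l$ together with the finite second moment of arrivals --- note it is the arrival constraint, not the channel, that caps service, since $c_l$ is unbounded in this case), the identity $E[\tilde a_l(k)]=\lambda_l(1-p_l)$, and the comparison of the max-weight schedule against a convex-hull (randomized) schedule whose expectation over $(a,c)$ equals $(1+\Delta)\mu(\Delta)$; this last step is exactly Fact~\ref{fact_online_opt_kc} combined with the definition of $\mathcal{C}$, and your displayed inequality $\sum_{l}[\frac{1}{\epsilon}w_l+d_l]E(I_l^*)\ge\sum_{l}[\frac{1}{\epsilon}w_l+d_l](1+\Delta)\mu_l(\Delta)$ is correct. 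The gap is in what you do with it next. That inequality does not ``separate'': maximality of the \emph{combined} weights $\frac{1}{\epsilon}w_l+d_l$ does not imply domination in the $d_l$-weighted coordinates alone, so your claim $\sum_{l} d_l[(1+\Delta)\mu_l(\Delta)-E(I_l^*)]\le 0$ is unjustified and is false in general. For instance, with two interfering links where $w_1/\epsilon$ is large and $d_1=0$ while $w_2=0$ and $d_2>0$ is small, the max-weight schedule serves only link~1, so its $d$-weighted service is $0$, whereas the comparison point may put its mass on link~2. Symmetrically, the $w_l$-weighted term $-\frac{1}{\epsilon}\sum_l\{w_l(1+\Delta)\mu_l(\Delta)-w_lE(I_l^*)\}$ need not be non-negative, so it cannot be appended to the bound ``for free'' either. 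Both separated inequalities your argument invokes can fail; only their sum is guaranteed.

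The repair is small and is precisely what the paper does: never split the inequality. In the drift bound, add and subtract $\frac{1}{\epsilon}\sum_l w_l E(I_l^*)$ so that the combined weight appears, apply your inequality once, and then use \eqref{inelastic_feasibility} (valid since $d_l\ge 0$):
\begin{align*}
\sum_{l\in\mathcal{L}} d_l\bigl[\lambda_l(1-p_l)-E(I_l^*)\bigr]
&=\sum_{l\in\mathcal{L}} d_l\lambda_l(1-p_l)-\sum_{l\in\mathcal{L}}\Bigl(\tfrac{1}{\epsilon}w_l+d_l\Bigr)E(I_l^*)+\tfrac{1}{\epsilon}\sum_{l\in\mathcal{L}} w_l E(I_l^*)\\
&\le-\Delta\sum_{l\in\mathcal{L}} d_l\mu_l(\Delta)-\tfrac{1}{\epsilon}\sum_{l\in\mathcal{L}}\bigl\{w_l(1+\Delta)\mu_l(\Delta)-w_l E(I_l^*)\bigr\}.
\end{align*}
Both terms of the lemma now emerge jointly from the single combined inequality, with $B_2=\Delta\min_{l}\mu_l(\Delta)>0$ because $\mu_l(\Delta)\ge\lambda_l(1-p_l)>0$. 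Equivalently, the correct rearrangement of your inequality reads $\sum_l d_l[(1+\Delta)\mu_l(\Delta)-E(I_l^*)]\le-\frac{1}{\epsilon}\sum_l\{w_l(1+\Delta)\mu_l(\Delta)-w_l E(I_l^*)\}$: the $d$-part is bounded by \emph{minus the $w$-part}, which is exactly the (not sign-definite) term retained in the statement of the lemma --- not by zero.
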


Since the last term in the right-hand side of the inequality can be upper-bounded, it can be shown that the expected drift is negative but for a finite set of values of $d(k)$. Thus, Lemma \ref{expected_drift_kc}  implies that $d(k)$ is positive recurrent. As a corollary of this result, we have that the total expected service deficit has an $O(1 / \epsilon)$ bound.

\begin{corollary}
\label{def_queue_bound_kc}
If there exists a point $\mu(\Delta) \in \mathcal{C}/(1+\Delta)$ for some $\Delta > 0$ such that (\ref{inelastic_feasibility}) holds true, then the total expected service deficit is upper-bounded by
\begin{equation*}
	\limsup_{k \rightarrow \infty} E\left[ \sum_{l \in \mathcal{L}} d_l(k) \right] \leq B_3 + \frac{1}{\epsilon} B_4
\end{equation*}
for all $l \in \mathcal{L}$, where $B_3 = B_1/B_2 \mbox{ and } B_4 = \sum_{l \in \mathcal{L}} w_l \lambda_{l} / B_2. \hfill \diamond$
\end{corollary}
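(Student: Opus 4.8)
The plan is to convert the one-step drift inequality of Lemma~\ref{expected_drift_kc} into a steady-state bound on the aggregate deficit by first discarding the scheduler-dependent term and then running a standard Foster--Lyapunov telescoping argument. The only term in that inequality not already in the form required by the claim is the final summation $-\frac{1}{\epsilon}\sum_{l} \{ w_l(1+\Delta)\mu_l(\Delta) - w_l E[I_l^*(a(k),c(k),d)] \}$, so the first step is to upper bound it by a constant independent of $d$.

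First I would handle that term in two pieces. Since $w \in \mathbb{R}_+^{|\mathcal{L}|}$ and each $\mu_l(\Delta) \geq 0$, the contributions $-\frac{1}{\epsilon} w_l(1+\Delta)\mu_l(\Delta)$ are nonpositive and can be dropped. For the rest, feasibility constraint~\eqref{arrival_constraint1_kc} together with the disjointness of the deadline intervals gives, for every admissible schedule, $I_l^* = \sum_{t \in \mathcal{T}} \tilde{s}_{lt}^* = \sum_{t \in \mathcal{T}^a_l}\sum_{j=t}^{\tau_{lt}} \tilde{s}_{lj}^* \leq \sum_{t \in \mathcal{T}^a_l} a_{lt} = a_l$, whence $E[I_l^*(a(k),c(k),d)] \leq \lambda_l$. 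These two facts collapse the drift to the clean form
\begin{equation*}
	E\left[ V(d(k+1)) \mid d(k)=d \right] - V(d) \leq B_1 + \frac{1}{\epsilon}\sum_{l \in \mathcal{L}} w_l \lambda_l - B_2 \sum_{l \in \mathcal{L}} d_l.
\end{equation*}

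With this uniform bound I would take expectations, sum from $k=0$ to $K-1$ so the $V$ terms telescope, discard $E[V(d(K))] \geq 0$, and divide by $KB_2$ to obtain
\begin{equation*}
	\frac{1}{K}\sum_{k=0}^{K-1} E\!\left[\sum_{l \in \mathcal{L}} d_l(k)\right] \leq B_3 + \frac{1}{\epsilon} B_4 + \frac{E[V(d(0))]}{K B_2},
\end{equation*}
using $B_3 = B_1/B_2$ and $B_4 = \sum_{l \in \mathcal{L}} w_l \lambda_l / B_2$. Letting $K \to \infty$ eliminates the initial-condition term and bounds the Cesàro average by $B_3 + \frac{1}{\epsilon} B_4$.

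The main obstacle is that the statement asks for $\limsup_k E[\sum_l d_l(k)]$, not the time average that telescoping naturally produces. To bridge this I would invoke the positive recurrence already established from Lemma~\ref{expected_drift_kc}: being irreducible, aperiodic, and positive recurrent, $d(k)$ converges to a unique stationary distribution $\pi$, and the quadratic Lyapunov bound supplies the uniform integrability needed to conclude that $E[\sum_l d_l(k)] \to E_\pi[\sum_l d_l]$, so the $\limsup$, the ordinary limit, and the Cesàro limit all coincide and are bounded by $B_3 + \frac{1}{\epsilon} B_4$. Equivalently, one can apply the collapsed drift directly in stationarity, where $d(k+1)$ and $d(k)$ are identically distributed and the expected drift averages to zero; the one genuinely delicate point in either route is justifying the finiteness of $E_\pi[V(d)]$ and the legitimacy of the interchange, which is exactly where the second-moment control furnished by the quadratic Lyapunov function is needed.
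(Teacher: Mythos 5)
Your proposal is correct and follows exactly the route the paper intends: the paper gives no explicit proof of Corollary~\ref{def_queue_bound_kc}, stating only that the last term in Lemma~\ref{expected_drift_kc} ``can be upper-bounded'' so that the drift is negative outside a finite set, and your two bounds (dropping the nonpositive $-\frac{1}{\epsilon}\sum_l w_l(1+\Delta)\mu_l(\Delta)$ term and using $I_l^* \leq a_l$ from constraints \eqref{arrival_constraint1_kc}--\eqref{arrival_constraint2_kc} to get $E[I_l^*]\leq\lambda_l$) are precisely what produces the constants $B_3 = B_1/B_2$ and $B_4 = \sum_l w_l\lambda_l/B_2$ in the statement. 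The subsequent Foster--Lyapunov telescoping and the passage from the Ces\`{a}ro average to the $\limsup$ via positive recurrence and uniform integrability are the standard steps the paper implicitly relies on, and you correctly flag the one delicate interchange.
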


Lemma \ref{expected_drift_kc} also implies that the scheduler fulfills all QoS requirements.
\begin{corollary}
If there exists a point $\mu(\Delta) \in \mathcal{C}/(1+\Delta)$ for some $\Delta > 0$ such that (\ref{inelastic_feasibility}) holds true, then the online algorithm fulfills all the QoS constraints. That is:
\begin{equation*}
	\liminf_{K \rightarrow \infty} E \left[ \frac{1}{K} \sum_{k=1}^{K} I_{l}^*(a(k),c(k),d(k)) \right] \geq \lambda_{l}(1-p_l)
\end{equation*}
for all $l \in \mathcal{L}$.
$\hfill \diamond$
\end{corollary}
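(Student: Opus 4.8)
The plan is to exploit the virtual-queue dynamics directly; the role of Lemma \ref{expected_drift_kc} enters only indirectly, through Corollary \ref{def_queue_bound_kc}, which guarantees that the expected deficit stays bounded. That boundedness is exactly what forces the time-average service to dominate the effective arrival rate $\lambda_l(1-p_l)$.

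First I would drop the projection in the update equation. Since $\alpha^+ \geq \alpha$ for every real $\alpha$, the recursion $d_l(k+1) = [\, d_l(k) + \tilde{a}_l(k) - I_l^*(a(k),c(k),d(k)) \,]^+$ gives the one-step inequality
\begin{equation*}
	I_l^*(a(k),c(k),d(k)) \geq d_l(k) + \tilde{a}_l(k) - d_l(k+1).
\end{equation*}
Summing this telescoping bound over $k = 1, \ldots, K$ and discarding the nonnegative term $d_l(1)$ yields
\begin{equation*}
	\sum_{k=1}^{K} I_l^*(a(k),c(k),d(k)) \geq \sum_{k=1}^{K} \tilde{a}_l(k) - d_l(K+1).
\end{equation*}

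Next I would take expectations. Conditioning on $a_l(k)$ and using that $\tilde{a}_l(k)$ is binomial with parameters $a_l(k)$ and $1-p_l$, we obtain $E[\tilde{a}_l(k)] = (1-p_l)E[a_l(k)] = \lambda_l(1-p_l)$ for every $k$. Hence, after dividing by $K$,
\begin{equation*}
	E\left[ \frac{1}{K}\sum_{k=1}^{K} I_l^*(a(k),c(k),d(k)) \right] \geq \lambda_l(1-p_l) - \frac{1}{K} E[d_l(K+1)].
\end{equation*}
Taking the $\liminf$ as $K \to \infty$ then reduces the claim to showing that the last term vanishes. By Corollary \ref{def_queue_bound_kc}, $\limsup_{k} E[\sum_{l} d_l(k)]$ is finite, so $E[d_l(K+1)]$ is bounded uniformly in $K$, which forces $E[d_l(K+1)]/K \to 0$ and establishes the stated QoS bound.

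The main obstacle is precisely the justification that $E[d_l(K+1)]/K \to 0$, i.e.\ that the expected deficit does not grow linearly in $K$. This is where positive recurrence and the $O(1/\epsilon)$ moment bound of Corollary \ref{def_queue_bound_kc} are indispensable: they supply a $K$-independent upper bound on $E[d_l(K+1)]$. I would also confirm that the initial deficit has finite expectation, so the bound is uniform for small $K$ as well; everything else is an elementary telescoping argument combined with linearity of expectation.
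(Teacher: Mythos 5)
Your proof is correct and takes essentially the route the paper intends: the paper gives no formal proof of this corollary, remarking only that it is ``an obvious consequence of the stability of the deficit counters,'' and your telescoping of the update inequality $I_l^*(a(k),c(k),d(k)) \geq d_l(k)+\tilde{a}_l(k)-d_l(k+1)$ combined with the uniform deficit bound of Corollary \ref{def_queue_bound_kc} is exactly the standard way to make that remark rigorous. The only hypothesis you must (and do) flag is $E[d_l(1)]<\infty$, the same assumption the paper invokes in the proof of Theorem \ref{optimality_online_kc}; with it, every $E[d_l(k)]$ is finite by induction (since $d_l(k+1)\leq d_l(k)+\tilde{a}_l(k)$ and arrivals have finite mean), so the finite $\limsup$ from Corollary \ref{def_queue_bound_kc} yields a $K$-independent bound on $E[d_l(K+1)]$ and hence $E[d_l(K+1)]/K\to 0$, as your argument requires.
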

The above corollary states that the arrival rate into the deficit counter is less than or equal to the service rate. The result is an obvious consequence of the stability of the deficit counters and so a formal proof is not given.

In order to be able to prove that our scheduler achieves the optimal solution to the static problem in an average sense, a related result to Lemma \ref{expected_drift_kc} must be stated first. Since the proof is similar to the proof in Lemma \ref{expected_drift_kc}, it is omitted.

\begin{lemma}
\label{bound_expected_drift_kc}
Consider the Lyapunov function $V(d)=\frac{1}{2}\sum_{l \in \mathcal{L}}d_l^2$. Then
\begin{align*}
	E & \left[ V(d(k+1)) | d(k)=d \right] - V(d) \leq B_1 - B_5 \sum_{l \in \mathcal{L}} d_l \\
 	& - \frac{1}{\epsilon} \sum_{l \in \mathcal{L}} \left\{ w_l \mu^*_{l} - w_l E \left[ I_{l}^*(a(k),c(k),d) \right] \right\}
\end{align*}
for $B_1 > 0$, some non-negative constant $B_5$, any $\epsilon>0$, where $\mu^*$ is a solution to \eqref{offline_opt_kc}, and $I^*(a(k),c(k),d)$ is obtained from the solution to (\ref{online_opt_sch_kc}).
$\hfill \diamond$
\end{lemma}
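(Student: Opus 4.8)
The plan is to reuse the quadratic-Lyapunov-drift computation behind Lemma \ref{expected_drift_kc}, replacing the strictly feasible rate vector $\mu(\Delta)$ by the optimizer $\mu^*$ of \eqref{offline_opt_kc}. This single substitution is exactly what turns the strictly positive coefficient $B_2$ of Lemma \ref{expected_drift_kc} into the merely non-negative coefficient $B_5$ claimed here, so the bulk of the argument is identical and only the maxweight comparison step changes its reference point.

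First I would expand the drift. Applying $([x]^+)^2 \leq x^2$ to the update equation to discard the projection, and writing $I_l^* = I_l^*(a(k),c(k),d)$, I obtain for each link
\begin{equation*}
	\tfrac{1}{2}\left[ d_l(k+1)^2 - d_l^2 \right] \leq \tfrac{1}{2}\left( \tilde{a}_l(k) - I_l^* \right)^2 + d_l \left( \tilde{a}_l(k) - I_l^* \right).
\end{equation*}
Summing over $l$ and taking the conditional expectation given $d(k)=d$, the square terms are absorbed into a finite constant $B_1$: both $\tilde{a}_l(k)$ and $I_l^*$ are bounded above by $a_l(k)$ (the latter by the arrival constraint \eqref{arrival_constraint1_kc}), whose second moment $\sigma_l^2 + \lambda_l^2$ is finite, so $\tfrac{1}{2}\sum_{l} E[(\tilde{a}_l(k)-I_l^*)^2] \leq B_1 < \infty$. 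Since $\tilde{a}_l(k)$ is binomial with parameters $a_l(k)$ and $1-p_l$ and the arrivals are independent of $d(k)$, we have $E[\tilde{a}_l(k)\,|\,d(k)=d] = \lambda_l(1-p_l)$, which isolates the cross term as $\sum_{l} d_l \lambda_l(1-p_l) - \sum_{l} d_l E[I_l^*]$.

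The key step is the maxweight comparison. Because $\mu^* \in \mathcal{C}$, the definition of the capacity region furnishes a stationary randomized scheduling rule whose expected per-frame service equals $\mu_l^*$ for every $l$. Since the online scheduler in \eqref{online_opt_sch_kc} maximizes $\sum_{l} (\tfrac{1}{\epsilon}w_l + d_l)\sum_{t} s_{lt}$ pointwise for each realization of $(a(k),c(k))$, and since $(a(k),c(k))$ is independent of $d$, taking expectations preserves the inequality
\begin{equation*}
	\sum_{l \in \mathcal{L}} \left( \tfrac{1}{\epsilon}w_l + d_l \right) E[I_l^*] \;\geq\; \sum_{l \in \mathcal{L}} \left( \tfrac{1}{\epsilon}w_l + d_l \right) \mu_l^*.
\end{equation*}
Rearranging this to bound $-\sum_{l} d_l E[I_l^*]$ from above and substituting into the drift yields
\begin{equation*}
	E[V(d(k+1))\,|\,d(k)=d] - V(d) \leq B_1 - \sum_{l \in \mathcal{L}} d_l\left[\mu_l^* - \lambda_l(1-p_l)\right] - \tfrac{1}{\epsilon}\sum_{l \in \mathcal{L}}\left\{ w_l \mu_l^* - w_l E[I_l^*] \right\}.
\end{equation*}
Finally, since $\mu^*$ is feasible for \eqref{offline_opt_kc} it satisfies $\mu_l^* \geq \lambda_l(1-p_l)$, so setting $B_5 = \min_{l \in \mathcal{L}} [\mu_l^* - \lambda_l(1-p_l)] \geq 0$ and using $d_l \geq 0$ gives $\sum_{l} d_l[\mu_l^* - \lambda_l(1-p_l)] \geq B_5 \sum_{l} d_l$, which produces the claimed bound.

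The main obstacle, and the only substantive departure from Lemma \ref{expected_drift_kc}, is justifying the maxweight inequality: one must invoke the definition of $\mathcal{C}$ to produce a randomized policy attaining $\mu^*$ in expectation, and argue that the pointwise optimality of \eqref{online_opt_sch_kc} for the current weights $\tfrac{1}{\epsilon}w_l + d_l$ survives the expectation over the $d$-independent arrival and channel processes. Unlike Lemma \ref{expected_drift_kc}, where the $\Delta$-slack forces $(1+\Delta)\mu_l(\Delta) - \lambda_l(1-p_l) \geq \Delta\lambda_l(1-p_l) > 0$ and hence $B_2 > 0$, here the optimizer may make the QoS constraint tight at some links, so only $B_5 \geq 0$ can be guaranteed.
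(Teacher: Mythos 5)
Your proposal is correct and is exactly the argument the paper intends: the paper omits this proof as ``similar to Lemma~\ref{expected_drift_kc},'' and your adaptation---running the same quadratic drift computation, replacing the strictly feasible point $(1+\Delta)\mu(\Delta)$ by $\mu^*\in\mathcal{C}$ in the maxweight comparison (via Fact~\ref{fact_online_opt_kc} and the definition of $\mathcal{C}$), and then using feasibility $\mu_l^*\geq\lambda_l(1-p_l)$ to get $B_5=\min_{l}[\mu_l^*-\lambda_l(1-p_l)]\geq 0$---is precisely that omitted proof. Your closing observation about why the slack coefficient degrades from $B_2>0$ to $B_5\geq 0$ correctly identifies the only substantive difference between the two lemmas.
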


Using Lemma \ref{bound_expected_drift_kc} we can now prove that our online algorithm is within $O(\epsilon)$ of the optimal value.
\begin{theorem}
\label{optimality_online_kc}
For any $\epsilon >0$ we have that
\begin{align*}
	\limsup_{K \rightarrow \infty} & E \left[ \sum_{l \in \mathcal{L}} w_l \mu^*_{l} - w_l \frac{1}{K} \sum_{k=1}^K I_{l}^*(a(k),c(k),d(k)) \right] \leq B \epsilon
\end{align*}
for some $B>0$, where $\mu^*$ is a solution to (\ref{offline_opt_kc}), and $I^*(a(k),c(k),d(k))$ is obtained from the solution to (\ref{online_opt_sch_kc}).
$\hfill \diamond$
\end{theorem}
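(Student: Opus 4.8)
The plan is to apply the Lyapunov drift bound of Lemma \ref{bound_expected_drift_kc}, telescope it over a finite horizon, and exploit the non-negativity of both the Lyapunov function $V$ and the deficit counters $d_l$. First I would take the total expectation of the drift inequality in Lemma \ref{bound_expected_drift_kc} (i.e.\ also average over the law of $d(k)$ via the tower property), obtaining
$$E[V(d(k+1))] - E[V(d(k))] \leq B_1 - B_5 E\left[\sum_{l \in \mathcal{L}} d_l(k)\right] - \frac{1}{\epsilon} E\left[\sum_{l \in \mathcal{L}} w_l\{\mu_l^* - I_l^*(a(k),c(k),d(k))\}\right].$$
Since $B_5 \geq 0$ and $d_l(k) \geq 0$, the middle term on the right is non-positive and can be dropped. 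Rearranging to isolate the objective gap gives
$$\frac{1}{\epsilon} E\left[\sum_{l \in \mathcal{L}} w_l\{\mu_l^* - I_l^*(a(k),c(k),d(k))\}\right] \leq B_1 + E[V(d(k))] - E[V(d(k+1))].$$

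Next I would sum this inequality over $k = 1, \ldots, K$. The right-hand side telescopes to $K B_1 + E[V(d(1))] - E[V(d(K+1))]$, and because $V \geq 0$ the terminal term $-E[V(d(K+1))]$ is non-positive and may be discarded. Dividing by $K$ and multiplying by $\epsilon$ yields
$$\frac{1}{K} \sum_{k=1}^K E\left[\sum_{l \in \mathcal{L}} w_l\{\mu_l^* - I_l^*(a(k),c(k),d(k))\}\right] \leq \epsilon B_1 + \frac{\epsilon}{K} E[V(d(1))].$$
Assuming a finite (e.g.\ empty) initial deficit so that $E[V(d(1))] < \infty$, the last term vanishes as $K \to \infty$, and taking $\limsup_{K\to\infty}$ leaves $\epsilon B_1$ on the right-hand side.

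Finally, since $\mu^*$ is a constant vector, linearity of expectation lets me pull the time-average $\frac{1}{K}\sum_k$ and the constant $\sum_{l} w_l \mu_l^*$ inside a single expectation, so that the left-hand side becomes exactly the quantity appearing in the theorem statement; setting $B = B_1$ then closes the argument. I do not expect a serious obstacle here, since this is the standard drift-plus-penalty telescoping argument. The only points requiring care are verifying that the conditioned service term $E[I_l^*(a(k),c(k),d) \mid d(k)=d]$ combines correctly with the outer expectation over $d(k)$ (handled by the tower property), and confirming that the interchange of $\limsup$ with the finite sum is legitimate — which it is, because each finite-$K$ average is bounded before passing to the limit, so no uniform-integrability issue arises.
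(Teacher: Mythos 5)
Your proposal is correct and follows essentially the same argument as the paper's proof: rearrange the drift bound of Lemma \ref{bound_expected_drift_kc}, drop the non-positive $-B_5\sum_l d_l$ term, take expectations, telescope over $k=1,\ldots,K$, discard $-E[V(d(K+1))]$ by non-negativity of $V$, and pass to the limit assuming $E[V(d(1))]<\infty$, yielding $B=B_1$. The only (immaterial) difference is that you take the total expectation before dropping the deficit term, whereas the paper drops it while still conditioned on $d(k)=d$.
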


Both Corollary \ref{def_queue_bound_kc} and Theorem \ref{optimality_online_kc} highlight the trade-off in choosing $\epsilon$: the closer we get to the optimal value for the static formulation, the more the aggregate in the deficit counters will increase.

The statement and proofs of Lemma \ref{expected_drift_kc} and Theorem \ref{optimality_online_kc} follow the techniques in \cite{Jaramillo10}, which are similar to the techniques in \cite{Neely05}. Slightly different results can be derived using the techniques in \cite{Stolyar05} and \cite{Eryilmaz05}.

%
%
\section{Unknown Channel State, Per-Frame Feedback}
\label{unknown_channel_case_pf}

Since the analysis for the unknown channel state, per-frame feedback case is similar in nature to the known channel case, in this section we will only highlight the differences between both cases.

Remember that we assume that the channel at link $l \in \mathcal{L}$, $c_l$, is a \emph{Bernoulli} random variable with mean $\bar{c}_l$, and we get to know the channel state only at the end of the frame. Thus, the set of feasible schedules when the arrivals are given by $a$ is the set of schedules that capture any interference constraints imposed by the network and that fulfill the following constraints:
\begin{equation}
\label{arrival_constraint1_uc}
	\sum_{ j=t }^{ \tau_{lt} } s_{lj} \leq a_{lt} \mbox{ for all } t \in \mathcal{T}^a_l \mbox{, } l \in \mathcal{L} \mbox{,}
\end{equation}
\begin{equation}
\label{arrival_constraint2_uc}
	s_{lt} = 0 \mbox{ for all } t \in \mathcal{T} \setminus\cup_{ t \in \mathcal{T}^a_l } \{ t, \ldots, \tau_{lt} \} \mbox{, } l \in \mathcal{L} \mbox{, and }
\end{equation}
\begin{equation}
\label{channel_constraint_uc}
	s_{lt} \leq 1 \mbox{ for all } l \in \mathcal{L} \mbox{ and } t \in \mathcal{T}.
\end{equation}

From our assumption that the channel state remains constant for the duration of a frame, \eqref{arrival_constraint1_uc} and \eqref{arrival_constraint2_uc} tell us that we should not schedule a link for transmission more than the number of packets that are available, since extra transmissions do not increase the number of successes. Furthermore, since the channel is Bernoulli, when a link is scheduled it can only transmit at most a packet, as the constraint in \eqref{channel_constraint_uc} indicates. Let $\mathcal{S}(a)$ denote the set of feasible schedules when the arrivals are given by $a$, capturing any interference constraints given by the network and fulfilling \eqref{arrival_constraint1_uc}, \eqref{arrival_constraint2_uc}, and \eqref{channel_constraint_uc}.

Following the same arguments as in Section \ref{formulation_kc}, we have to design a scheduling strategy $Pr(s | a)$ that is the probability of using schedule $s \in \mathcal{S}(a)$ when the arrivals are given by $a$. Since we cannot base our policy on the state of the channel, as we did before, the expected service to link $l \in \mathcal{L}$ is now given by
\begin{equation*}
	\mu_{l} \leq \sum\limits_{a,c} \sum\limits_{s \in \mathcal{S}(a)} \sum_{ t \in \mathcal{T} } c_l s_{lt} Pr(s | a) Pr(c) Pr(a).
\end{equation*}
Simplifying the equation we get
\begin{equation}
\label{new_def_uc}
	\mu_{l} \leq \sum\limits_{a} \sum\limits_{s \in \mathcal{S}(a)} \sum_{ t \in \mathcal{T} } \bar{c}_l s_{lt} Pr(s | a) Pr(a).
\end{equation}
That is, the expected service depends on the average channel state. Properly modifying the definition of the capacity of the network, we can write the static formulation as in \eqref{offline_opt_kc}.

Using the dual decomposition technique of Section \ref{decomposition_kc} we can develop the design ideas behind the following dynamic scheduler, assuming that at frame $k$ the arrivals are given by $a(k)$:
\begin{equation}
\label{online_opt_sch_uc}
	\tilde{s}^*( a(k),d(k) ) \in \mathop{\arg\max}\limits_{s \in \mathcal{S}( a(k) )} \sum\limits_{l \in \mathcal{L}} [ \frac{1}{\epsilon} w_l+d_{l}(k) ] \bar{c}_l \sum_{ t \in \mathcal{T} } s_{lt},
\end{equation}
with update equation
\begin{equation*}
	d_{l}(k+1) = [ d_{l}(k) + \tilde{a}_{l}(k) - I_{l}^*( a(k),d(k) ) ]^+,
\end{equation*}
where
\begin{equation*}
	I_{l}^*( a(k),d(k) ) = \sum_{ t\in \mathcal{T} } \tilde{s}_{lt}^*( a(k),d(k) )
\end{equation*}
and $\tilde{a}_{l}(k)$ is a binomial random variable with parameters $a_{l}(k)$ and $1-p_l$.

The main difference in the scheduler compared to the known channel case is that the algorithm now uses the expected channel state in making scheduling decisions. Thus, the algorithm needs to know or estimate $\bar{c}_l$.

Using the same techniques as in Section \ref{convergence_kc}, we can prove that the scheduler meets all the QoS requirements, the total expected service deficits have an $O(1 / \epsilon)$ bound, and the mean value of the objective is within $O(\epsilon)$ of the optimal value.

%
%
\section{Unknown Channel State, Per-Slot Feedback}
\label{unknown_channel_case_ps}


Compared to the previous two cases, the per-slot feedback case is more complex due to the fact that we can use the feedback to update our decisions at every time slot. In this section we will first formulate the problem focusing on policies rather than on schedules, and we will show that no simple decomposition can be achieved for this case. However, we will prove that for a simple scenario a greedy solution can achieve the optimal solution.

\subsection{Problem Formulation and Solution}
\label{formulation_uc_ps}

Similar to the development in Section \ref{unknown_channel_case_pf}, we will only highlight the differences between this case and the known channel state case.

As described in Section \ref{network_model}, the channel at link $l \in \mathcal{L}$, time slot $t \in \mathcal{T}$, $c_{lt}$, is assumed to be a Bernoulli random variable with mean $\bar{c}_l$. Thus, instead of choosing a schedule for the entire frame, we will try to find a scheduling policy $\rho$ that makes decisions at every time slot based on the feedback received. Note however that if the arrivals and channel state at a given frame are given by $a$ and $c$ respectively, policy $\rho$ will generate a schedule by the end of the frame. We will denote by $s(\rho,a,c)$ such schedule.

We only focus on feasible policies, which are defined to be policies that generate a schedule that meets all interference constraints given by the network and fulfill the following constraints, for fixed $\rho$, $a$, and $c$:
\begin{equation}
\label{arrival_constraint2_ps}
	s_{lt}(\rho,a,c) = 0 \mbox{ for all } t \in \mathcal{T} \setminus\cup_{ t \in \mathcal{T}^a_l } \{ t, \ldots, \tau_{lt} \} \mbox{, } l \in \mathcal{L},
\end{equation}
\begin{equation}
\label{channel_constraint_ps}
	s_{lt}(\rho,a,c) \leq 1 \mbox{ for all } l \in \mathcal{L} \mbox{ and } t \in \mathcal{T} \mbox{, and }
\end{equation}
\begin{equation}
\label{arrival_constraint1_ps}
	\sum_{ j=t }^{ \tau_{lt} } c_{lj} s_{lj}(\rho,a,c) \leq a_{lt} \mbox{ for all } t \in \mathcal{T}^a_l \mbox{, } l \in \mathcal{L}.
\end{equation}

Note that \eqref{arrival_constraint2_ps} specifies that a link should not be scheduled if there is no packet to be transmitted, \eqref{channel_constraint_ps} states that we cannot schedule more than a packet in a time slot since the channel is Bernoulli, and \eqref{arrival_constraint1_ps} specifies that a feasible policy cannot have more successful transmissions than the number of packets available.

We highlight the fact that since there is only a finite number of feasible schedules, then the set of feasible policies is finite. We will denote by $\mathcal{P}(a)$ the set of feasible policies that meet all interference constraints and that fulfill \eqref{arrival_constraint2_ps}, \eqref{channel_constraint_ps}, and \eqref{arrival_constraint1_ps}, when arrivals are given by $a$.

Our goal is to find the probability distribution $Pr(\rho | a)$ of using policy $\rho \in \mathcal{P}(a)$ in a given frame when arrivals are given by $a$, such that the fraction of packets that miss the deadline at link $l$ cannot exceed $p_l$. Thus, the expected service at link $l \in \mathcal{L}$ is subject to the following constraint
\begin{equation*}
	\mu_{l} \leq \sum\limits_{a,c} \sum\limits_{\rho \in \mathcal{P}(a)} \sum_{ t \in \mathcal{T} } c_{lt} s_{lt}(\rho, a, c) Pr(\rho | a) Pr(c) Pr(a).
\end{equation*}

Therefore, the optimization problem is as follows, for a given vector $w \in \mathbb{R}_+^{|\mathcal{L}|}$:
\begin{equation}
\label{offline_opt_ps}
	 \max\limits_{ \mu, Pr(\rho | a) } \sum\limits_{l \in \mathcal{L}} w_l \mu_{l}
\end{equation}
subject to
\begin{equation*}
	\mu_{l} \leq \sum\limits_{a,c} \sum\limits_{\rho \in \mathcal{P}(a)} \sum_{ t \in \mathcal{T} } c_{lt} s_{lt}(\rho, a, c) Pr(\rho | a) Pr(c) Pr(a) \mbox{ for all } l
\end{equation*}
\begin{equation*}
	\mu_{l} \geq \lambda_{l}(1-p_l) \mbox{ for all } l \in \mathcal{L}
\end{equation*}
\begin{equation*}
	Pr(\rho | a) \geq 0 \mbox{ for all } \rho \in \mathcal{P}(a), a
\end{equation*}
\begin{equation*}
	\sum_{\rho \in \mathcal{P}(a)} Pr(\rho | a) \leq 1  \mbox{ for all } a.
\end{equation*}
We will assume that the arrivals and loss probability requirements are feasible and thus the optimization problem has a solution $\mu^*$.

Following the arguments in Section \ref{decomposition_kc}, we can develop the design ideas behind the following dynamic scheduler, assuming that at frame $k$ the arrivals are given by $a(k)$ and the channel state by $c(k)$:
\begin{equation}
\label{online_opt_sch_uc_ps}
	\tilde{\rho}^*(a(k),d(k)) \in \mathop{\arg\max}\limits_{\rho \in \mathcal{P}(a(k))} \sum\limits_{l \in \mathcal{L}} [ \frac{1}{\epsilon} w_l+d_{l}(k) ] \mu_{l}(\rho,a(k))
\end{equation}
with update equation
\begin{equation*}
	d_{l}(k+1) = [ d_{l}(k) + \tilde{a}_{l}(k) - I_{l}^*( a(k),c(k),d(k) ) ] ]^+,
\end{equation*}
where
\begin{equation}
	\mu_{l}(\rho,a(k)) = \sum\limits_{c} \sum_{ t \in \mathcal{T} } c_{lt} s_{lt}(\rho, a(k), c) Pr(c),
\end{equation}
\begin{equation*}
	I_{l}^*( a(k),c(k),d(k) ) = \sum_{ t \in \mathcal{T} } c_{lt}(k) s_{lt}(\tilde{\rho}^*(k), a(k), c(k)),
\end{equation*}
$\tilde{\rho}^*(k) = \tilde{\rho}^*(a(k),d(k))$, and $\tilde{a}_{l}(k)$ is a binomial random variable with parameters $a_{l}(k)$ and $1-p_l$.

We note that compared to the known channel case, the algorithm needs to know the probability distribution of $c$ in order to make optimal decisions. From \eqref{online_opt_sch_uc_ps} we see that the duality approach does not give us a simple decomposition as in \eqref{decomposition_algorithm_kc}. The reason comes from the fact that even though per-slot feedback may help us to potentially increase the throughput, it also increases the complexity of the decision algorithm.

Using the same proof techniques as in Section \ref{convergence_kc}, it can be proved that the scheduler meets all the QoS requirements, the total expected service deficits have an $O(1 / \epsilon)$ bound, and the mean value of the objective is within $O(\epsilon)$ of the optimal value.

\subsection{A Greedy Strategy for Colocated Networks}
\label{greedy_strategy_uc_ps}

In this Section we will show that in a simple scenario a greedy algorithm can achieve the optimal solution with minimum complexity. To do that, we will focus our attention to colocated networks, where only one link is allowed to transmit at any given time slot. We will also assume that the channel state is independent between different time slots. Furthermore, we will assume that at every frame there is a single packet arrival at every link at the beginning of the frame, and that all the packets must be transmitted by the end of the frame. That is,
\begin{equation}
\label{arrivals_gs_uc}
	\mathcal{T}^a_l = \{ 1 \} \mbox{, } a_{l1} = 1 \mbox{, and  } \tau_{l1} = T \mbox{ for all }l \in \mathcal{L}.
\end{equation}

The key idea we will use in this section is that for a given frame when the deficit counters are given by $d$, links will be prioritized in decreasing order of the priorities $[ \frac{1}{\epsilon} w_l + d_l ] \bar{c}_l$.

\begin{definition}
A \emph{greedy policy} for colocated networks is a scheduling policy that at every time slot schedules a link with the highest priority $[ \frac{1}{\epsilon} w_l + d_l ] \bar{c}_l$ among the links that have a packet that remains to be transmitted.
$\hfill \diamond$
\end{definition}

\begin{theorem}
\label{optimality_greedy}
The greedy scheduler is the optimal solution to \eqref{online_opt_sch_uc_ps} for colocated networks, when the arrivals are given by \eqref{arrivals_gs_uc}, and the channel state is independent between different time slots.
$\hfill \diamond$
\end{theorem}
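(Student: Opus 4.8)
The plan is to recast the one-frame scheduling subproblem \eqref{online_opt_sch_uc_ps} as a finite-horizon Markov decision process and solve it by backward induction. In the colocated setting with arrivals \eqref{arrivals_gs_uc}, every link starts the frame with exactly one packet, at most one link transmits per slot, and a scheduled link $l$ delivers its packet with probability $\bar c_l$ independently across slots. Writing $\beta_l \stackrel{def}{=} \frac{1}{\epsilon}w_l + d_l$, the objective $\sum_{l} \beta_l \mu_l(\rho,a)$ equals $\sum_l \beta_l \Pr(\text{link }l\text{ delivered})$. Because a link needs only one success and the per-slot channel is i.i.d.\ Bernoulli, the number of past (failed) attempts on an undelivered link carries no information about its future: the residual number of attempts to success is memoryless. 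First I would use this to argue that the pair (current slot $t$, set $A\subseteq\mathcal{L}$ of still-undelivered links) is a sufficient statistic, so that it suffices to optimize over Markov policies indexed by $(t,A)$ and the maximum over $\mathcal{P}(a)$ is attained by the MDP's optimal policy.

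Then I would write the Bellman recursion for the optimal value-to-go $V_t(A)$ from slot $t$,
\begin{equation*}
V_t(A) = \max_{l\in A}\Big[\,\bar c_l\big(\beta_l + V_{t+1}(A\setminus\{l\})\big) + (1-\bar c_l)V_{t+1}(A)\,\Big],
\end{equation*}
with $V_{T+1}(\cdot)=0$ (idling is included but will be dominated). A one-line rearrangement gives the per-slot objective in index form: transmitting $l$ is worth $V_{t+1}(A) + g_l - \bar c_l \Delta_l$, where $g_l \stackrel{def}{=} \beta_l \bar c_l$ is exactly the greedy priority and $\Delta_l \stackrel{def}{=} V_{t+1}(A)-V_{t+1}(A\setminus\{l\})$ is the marginal value of keeping link $l$ available. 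Thus the DP-optimal action transmits the link maximizing $g_l-\bar c_l\Delta_l$, and the theorem reduces to the claim that this maximizer can always be taken to be the link of largest priority $g_l$. I would prove this by backward induction on $t$, carrying the joint hypothesis that (i) the greedy rule is optimal from slot $t{+}1$ on and (ii) the index respects the priority order, i.e.\ $g_i\ge g_j \Rightarrow g_i-\bar c_i\Delta_i \ge g_j-\bar c_j\Delta_j$ for the $\Delta$'s computed from $V_{t+1}$. The base case $t=T$ is immediate since $V_T(A)=\max_{l\in A}g_l$, and (ii) at $t{+}1$ makes the highest-$g_l$ link maximize the index at slot $t$, which is precisely greedy.

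The main obstacle is propagating the index ordering (ii) through the induction, i.e.\ controlling the opportunity-cost terms $\bar c_l\Delta_l$. The difficulty is that removing a high-priority link from $A$ hurts the future more than removing a low-priority one, so $\Delta_i$ tends to be \emph{larger} than $\Delta_j$ when $g_i\ge g_j$; the inequality in (ii) is therefore not termwise and must be closed using the slack $g_i-g_j$. Two ingredients make this tractable. First, the uniform bounds $0\le\Delta_l\le\beta_l$: the lower bound holds because any policy for $A\setminus\{l\}$ is feasible for $A$, and the upper bound follows from a mimicking/coupling argument in which a policy for $A\setminus\{l\}$ copies the optimal policy for $A$ but forgoes every attempt on $l$, losing at most the reward $\beta_l$. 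Second, a comparison of the value functions of adjacent states $A$ and $A\setminus\{l\}$ obtained from the greedy structure of $V_t$ (greedy serves the top-priority active link until it succeeds, then the next), which lets me express each $\Delta_l$ as a nonnegative combination of consecutive priority gaps and verify (ii) by cases according to whether $l$ is the current top link. I expect this case analysis, establishing the right monotonicity relation between $V_t$ on states differing by one link, to be the technical heart of the proof.

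Finally, I would conclude that the greedy policy coincides with the DP-optimal Markov policy, and hence, since every feasible policy in $\mathcal{P}(a)$ earns at most $V_1(\mathcal{L})$, the greedy scheduler attains the maximum in \eqref{online_opt_sch_uc_ps}. As a sanity check and as intuition for the index, the same per-slot comparison reads as a one-step exchange: transmitting the top link and then continuing optimally beats transmitting any other link by a margin proportional to $(g_i-g_j)$ times the probability that a retry is actually needed, which is exactly the advantage greedy gains from reusing failed slots on the most valuable link. This also explains why a naive non-adaptive swap of two fixed transmissions is value-neutral, and why the adaptive retry structure afforded by per-slot feedback is essential to the result.
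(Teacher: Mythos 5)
Your setup is sound and in fact coincides with the paper's framework: a backward induction on a value function indexed by (slot, set of undelivered links), with exactly the Bellman recursion you wrote. The genuine gap is precisely where you place ``the technical heart.'' Your reduction is correct --- greedy is optimal at slot $t$ if and only if a top-priority link maximizes the index $g_l - \bar c_l \Delta_l$ --- but the ordering property (ii) that you need is never actually proved. The ingredients you do establish are insufficient: the bounds $0 \le \Delta_l \le \beta_l$ only give $\bar c_i \Delta_i - \bar c_j \Delta_j \le \bar c_i \beta_i = g_i$, whereas (ii) requires the much finer estimate $\bar c_i \Delta_i - \bar c_j \Delta_j \le g_i - g_j$; the slack $g_j$ is exactly what is missing. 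Moreover, propagating (ii) down the induction is not routine bookkeeping: under the greedy value function one gets, for $l \ne l^*$, the recursion $\Delta^t_l(A) = (1-\bar c_{l^*})\Delta^{t+1}_l(A) + \bar c_{l^*}\Delta^{t+1}_l(A\setminus\{l^*\})$, which couples $\Delta$'s across different states, and for $l = l^*$ a different recursion involving the priority gap to the second-best link; the case analysis you defer to is a real piece of work (it amounts to rediscovering, e.g., that the index advantage of $l^*$ over $\tilde l$ has the form $(g_{l^*}-g_{\tilde l})$ times a probability), and as written the proof is incomplete at its crux.

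The paper closes this step without ever analyzing the $\Delta$'s, and its device would repair your argument directly. Rather than comparing indices, compare the two policies ``transmit $\tilde l$ now, then greedy'' and ``transmit $l^*$ now, then greedy.'' Expand the first one \emph{two} steps: after $\tilde l$'s slot, greedy transmits $l^*$ in both continuations ($\mathcal L$ and $\mathcal L \setminus \{\tilde l\}$), so the four resulting terms can be regrouped around the outcome of $l^*$'s transmission as $\pi_{l^*}\bar c_{l^*} + (1-\bar c_{l^*})\,U_{\tilde l,g}(\mathcal L, j) + \bar c_{l^*}\,U_{\tilde l,g}(\mathcal L\setminus\{l^*\}, j)$, where $U_{\tilde l,g}(X, j)$ denotes the value of ``transmit $\tilde l$, then greedy'' from state $X$ with $j$ slots left. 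Each such bracket is at most the greedy value $U_g(X,j)$ by nothing more than the inductive hypothesis that greedy is optimal with $j$ slots remaining, and the resulting upper bound is exactly the value of ``transmit $l^*$ now, then greedy.'' In your notation this regrouping \emph{is} a proof of $g_{l^*} - \bar c_{l^*}\Delta_{l^*} \ge g_{\tilde l} - \bar c_{\tilde l}\Delta_{\tilde l}$, i.e., of the only instance of (ii) you actually need, obtained with no monotonicity lemma and no bounds on $\Delta_l$. So either import this exchange step, or supply an honest proof of (ii); without one of the two, the argument does not go through.
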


For ease of readability, we defer the proof to the appendixes. Due to the optimality of the policy, and following a similar development as in Section \ref{convergence_kc}, one can prove that the greedy algorithm meets all the QoS constraints, the total expected service deficits have an $O(1 / \epsilon)$ bound, and the mean value of the objective is within $O(\epsilon)$ of the optimal value. We skip the proofs since they are analogous to the ones already presented.

The above theorem shows that the dual decomposition solution presented here recovers the solution for the special case of access-point networks presented in \cite{Hou09a}. The contribution of this section is to show that the dual approach allow us to extend such results for very general ad hoc networks, arrivals, and for heterogeneous delays, and that \cite{Hou09a} can be seen as a particular case of our general formulation.

%
%
\section{Simulations}
\label{simulations}

The purpose of the simulations is to compare the throughput that can be achieved for the different channel models. To do that, we simulate a 10-link network such that its interference graph is given by Fig. \ref{interference_graph}, where each vertex of the graph represents a link and the edges represent the interference constraints. For example, if link 1 is scheduled, then links 2, 4, and 7 cannot be activated. This interference graph was first used in \cite{Jaramillo10}.

\begin{figure}[t]
	\centering
	\includegraphics[angle=0, width=2.0in]{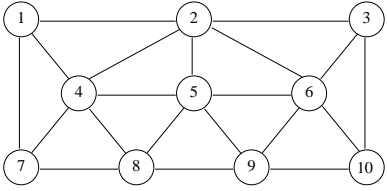}
	\caption{Interference graph used in the simulations}
	\label{interference_graph}
\end{figure}

The constraint for the dropping probability due to deadline expiration is set to 0.1, the packet arrivals at every link are assumed to be Bernoulli random variables with mean 0.6 packets/frame, and a frame has 3 time slots. Every channel is assumed to be a Bernoulli random variable with mean 0.96. We will compare the three channel models studied in this paper: known channel state, unknown channel state per-frame feedback, and unknown channel state per-slot feedback. The simulation time is $10^6$ frames.

In Figs. \ref{rate_ten_w0_ch} and \ref{rate_ten_w6_ch} we plot the average service for different values of $w_l$. The case $w_l=0$ means that we are only interested in finding a feasible solution without any concern of optimality. This results in a rate assignment that is only slightly above the minimum required to achieve an acceptable dropping probability, as shown in Fig. \ref{drop_prob_ten_w0_ch}. In the case $w_l=6$ we have that the algorithm tries to maximize the service rate for all links, resulting in a significant decrease of the dropping probability as shown in Fig. \ref{drop_prob_ten_w6_ch}. This result suggests that the objective function not only has a role giving priorities to links according to its weights, but also has a role in decreasing the dropping probabilities, allocating any available capacity to links.

It is interesting to note that the difference in service rates for the different channel models is smaller than $2\%$, which suggests that the added complexity in solving the optimal scheduler for the per-slot feedback case does not compare with the gains that can be achieved in terms of throughput. Thus, the per-frame feedback case can be used as a low-complexity approximation to the per-slot feedback case.

\begin{figure}[t]
	\centering
	\includegraphics[angle=270, width=3.3in]{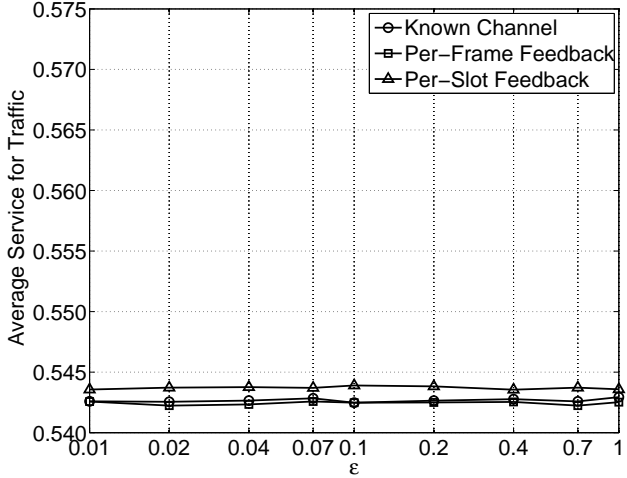}
	\caption{Average service when $w_l=0$}
	\label{rate_ten_w0_ch}
\end{figure}
\begin{figure}[t]
	\centering
	\includegraphics[angle=270, width=3.3in]{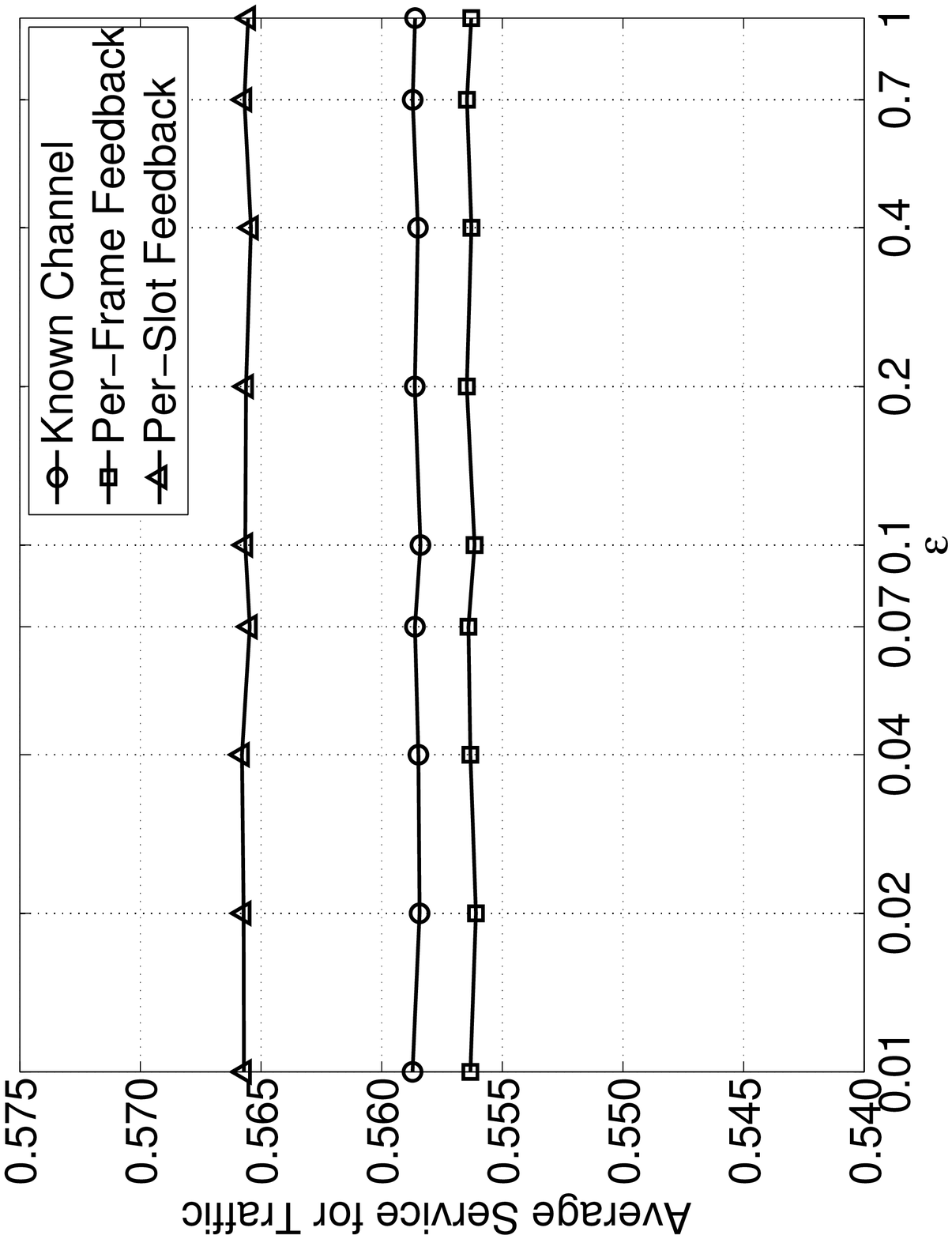}
	\caption{Average service when $w_l=6$}
	\label{rate_ten_w6_ch}
\end{figure}

\begin{figure}[t]
	\centering
	\includegraphics[angle=270, width=3.3in]{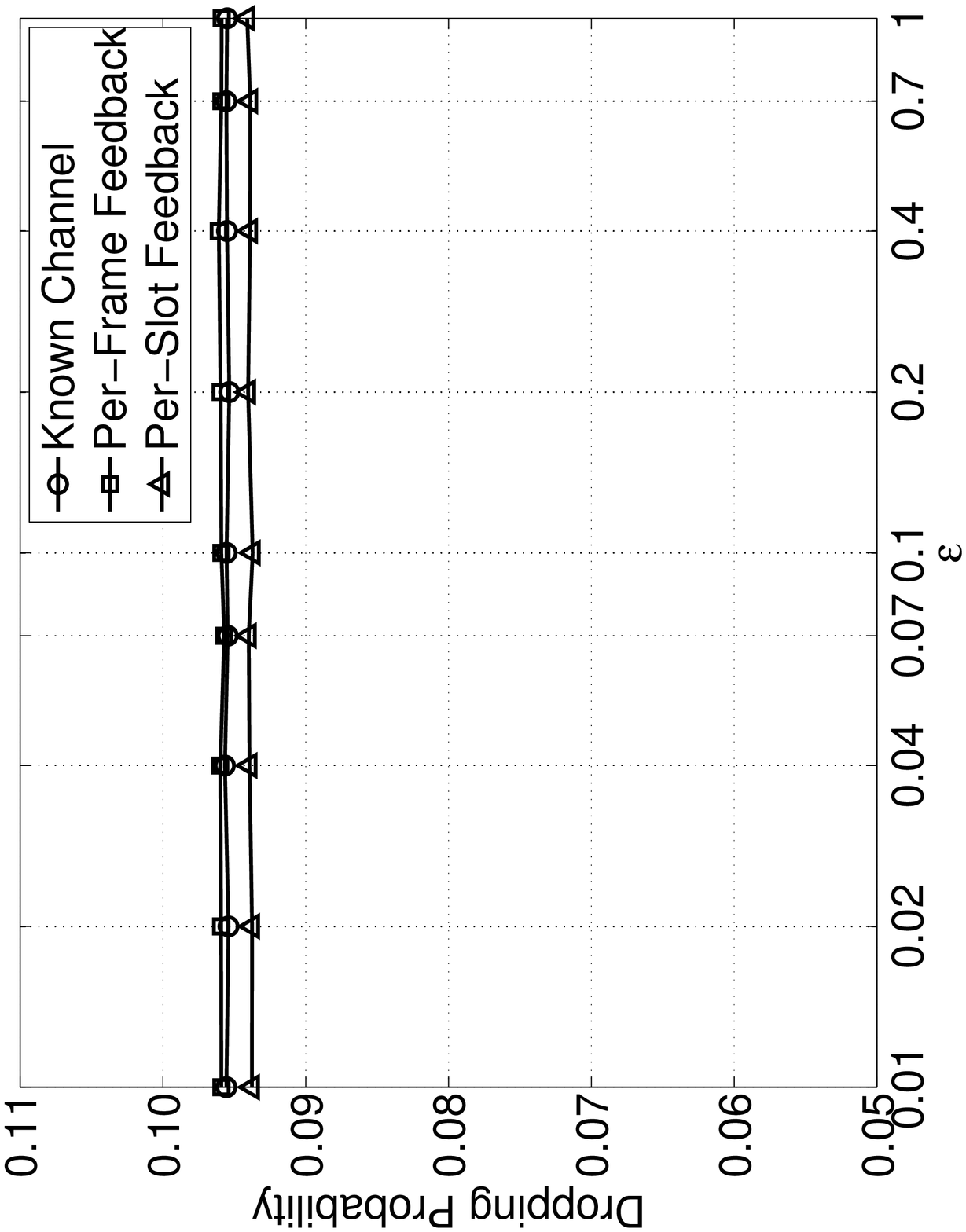}
	\caption{Dropping probability when $w_l=0$}
	\label{drop_prob_ten_w0_ch}
\end{figure}
\begin{figure}[t]
	\centering
	\includegraphics[angle=270, width=3.3in]{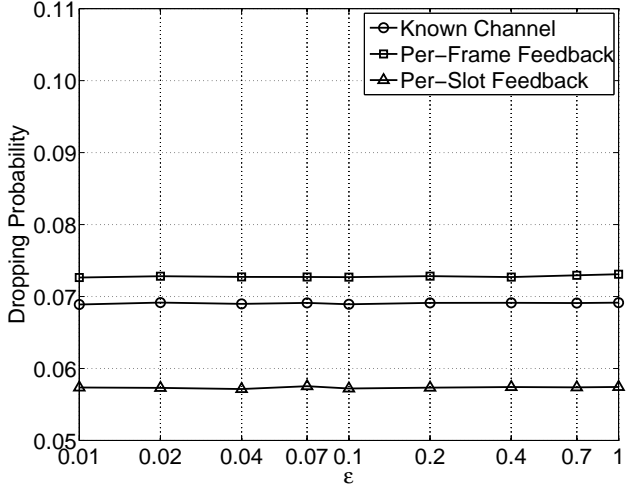}
	\caption{Dropping probability when $w_l=6$}
	\label{drop_prob_ten_w6_ch}
\end{figure}

%
%
\section{Conclusions}
\label{conclusions}

In this work we have presented an optimization formulation for the problem of scheduling real-time traffic in ad hoc networks under maximum delay constraints. The model allows for general arrival models with heterogeneous delay constraints. Using duality theory and a decomposition approach, we presented an optimal scheduler and proved that it fairly allocates data rates to all links and guarantees that the delay requirements are met at every flow. We further studied the impact of feedback at every time slot on the complexity of the optimal algorithm, and showed that for a certain simple scenario a greedy strategy can achieve the optimal solution with low complexity, recovering the results of \cite{Hou09a} for access-point networks.

%
%
\appendices
\section{Proof of Lemma \ref{expected_drift_kc}}

Before we prove Lemma \ref{expected_drift_kc}, we present the following fact.

\begin{fact}
\label{fact_online_opt_kc}
The optimization in (\ref{online_opt_sch_kc}) can be performed over $\mathcal{S}(a(k),c(k))_\mathcal{CH}$, the convex hull of $\mathcal{S}(a(k),c(k))$; that is,
\begin{align*}
	&\mathop{\arg\max}\limits_{s \in \mathcal{S}( a(k),c(k) )} \sum\limits_{l \in \mathcal{L}} [ \frac{1}{\epsilon} w_l+d_{l}(k) ] \sum_{ t \in \mathcal{T} } s_{lt} = \\
	& \mathop{\arg\max}\limits_{s \in \mathcal{S}( a(k),c(k) )_\mathcal{CH}} \sum\limits_{l \in \mathcal{L}} [ \frac{1}{\epsilon} w_l+d_{l}(k) ] \sum_{ t \in \mathcal{T} } s_{lt}.
\end{align*}
The reason for this comes from the fact that the objective function is linear and therefore there must be an optimal point $\tilde{s}^*( a(k),c(k),d(k) ) \in \mathcal{S}(a(k),c(k))$.
$\hfill \diamond$
\end{fact}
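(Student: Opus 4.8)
The plan is to recognize this as the standard linear-programming observation that a linear functional maximized over the convex hull of a finite point set attains its optimum at one of the original points. The objective in \eqref{online_opt_sch_kc} is, for fixed $k$ (hence fixed $d(k)$, $w$, and $\epsilon$), the fixed linear functional $f(s) = \sum_{l \in \mathcal{L}} \left[ \tfrac{1}{\epsilon} w_l + d_l(k) \right] \sum_{t \in \mathcal{T}} s_{lt}$. I would first argue that $\mathcal{S}(a(k),c(k))$ is a finite set: by \eqref{arrival_constraint1_kc}--\eqref{channel_constraint_kc} each coordinate $s_{lt}$ is a non-negative integer bounded above by $\min\{a_{lt}, c_l\}$ (and is forced to $0$ outside the admissible time windows), so only finitely many integer vectors satisfy the constraints together with the interference constraints. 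Consequently $\mathcal{S}(a(k),c(k))_{\mathcal{CH}}$ is a compact convex polytope whose extreme points form a subset of $\mathcal{S}(a(k),c(k))$.

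The core step is a one-line convexity estimate. Take an arbitrary $\bar{s} \in \mathcal{S}(a(k),c(k))_{\mathcal{CH}}$ and write it as a convex combination $\bar{s} = \sum_{i} \alpha_i s^{(i)}$ with each $s^{(i)} \in \mathcal{S}(a(k),c(k))$, $\alpha_i \geq 0$, and $\sum_i \alpha_i = 1$. By linearity of $f$,
\begin{equation*}
	f(\bar{s}) = \sum_{i} \alpha_i f(s^{(i)}) \leq \sum_{i} \alpha_i \max_{s \in \mathcal{S}(a(k),c(k))} f(s) = \max_{s \in \mathcal{S}(a(k),c(k))} f(s).
\end{equation*}
Since $\mathcal{S}(a(k),c(k)) \subseteq \mathcal{S}(a(k),c(k))_{\mathcal{CH}}$, the reverse inequality between the two maximal values is immediate, so the optimal values coincide and at least one maximizer over the convex hull lies in $\mathcal{S}(a(k),c(k))$. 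This is exactly the assertion that the optimization may be carried out over the convex hull while still yielding an $\tilde{s}^*(a(k),c(k),d(k)) \in \mathcal{S}(a(k),c(k))$.

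There is no deep obstacle here; the only point requiring care is the precise reading of the displayed equality of $\arg\max$ sets. Strictly speaking the two $\arg\max$ sets need not be literally identical: if several vertices in $\mathcal{S}(a(k),c(k))$ tie for the optimum, any convex combination of them maximizes $f$ over the hull yet need not belong to $\mathcal{S}(a(k),c(k))$. I would therefore state the conclusion as the equality of optimal values together with the existence of a maximizer in $\mathcal{S}(a(k),c(k))$, which is all that the subsequent analysis uses, and note that the displayed identity is to be understood in this sense (equivalently, $\mathcal{S}(a(k),c(k)) \cap \arg\max_{\mathcal{S}(a(k),c(k))_{\mathcal{CH}}} f \neq \emptyset$). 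The finiteness argument of the first paragraph is what guarantees the hull is a polytope and thus that its vertices are drawn from the original schedule set, closing the argument.
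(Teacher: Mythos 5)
Your proof is correct and takes essentially the same approach as the paper, whose entire argument is the one-line observation that the objective is linear and hence attains its maximum over the convex hull at a point of the finite set $\mathcal{S}(a(k),c(k))$; you have simply filled in the standard convex-combination estimate and the finiteness/polytope details. Your caveat that the displayed $\arg\max$ equality should really be read as equality of optimal values together with $\mathcal{S}(a(k),c(k)) \cap \arg\max_{\mathcal{S}(a(k),c(k))_\mathcal{CH}} f \neq \emptyset$ (since ties can make the hull's maximizer set strictly larger) is a valid sharpening of the paper's loose statement, and the value equality is indeed all that is used later in the proof of Lemma \ref{expected_drift_kc}.
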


\begin{proof}[Proof of Lemma \ref{expected_drift_kc}]
\begin{align}
	E & \left[ V(d(k+1)) | d(k)=d \right] - V(d) \notag \\
 	= & E \left[ \frac{1}{2} \sum_{l \in \mathcal{L}} \{ [ d_l + \tilde{a}_{l}(k) - I_{l}^*(a(k),c(k),d) ]^+ \}^2 \right] - \sum_{l \in \mathcal{L}} \frac{d_l^2}{2} \notag \\
	\leq & E \left[ \frac{1}{2} \sum_{l \in \mathcal{L}} [ d_l + \tilde{a}_{l}(k) - I_{l}^*(a(k),c(k),d) ]^2 \right] - \sum_{l \in \mathcal{L}} \frac{d_l^2}{2} \notag \\
 	= & E \left[ \sum_{l \in \mathcal{L}} d_l [ \tilde{a}_{l}(k) - I_{l}^*(a(k),c(k),d) ] \right. \notag \\
 	& \left. + \frac{1}{2} \sum_{l \in \mathcal{L}} [ \tilde{a}_{l}(k) - I_{l}^*(a(k),c(k),d) ]^2 \right] \notag \\
 	\leq & E \left[ \sum_{l \in \mathcal{L}} d_l \tilde{a}_{l}(k) - d_l I_{l}^*(a(k),c(k),d) \right. \notag \\
 	& \left. + \frac{1}{2} \sum_{l \in \mathcal{L}} \tilde{a}_{l}^2(k) + a_{l}^2(k) \right] \label{def_I_l} \\
 	\leq & B_1 + \sum_{l \in \mathcal{L}} d_l \lambda_{l} (1-p_l) \notag \\
 	& - E \left[ \sum_{l \in \mathcal{L}} \left( \frac{1}{\epsilon}w_l + d_l \right) I_{l}^*(a(k),c(k),d) \right. \notag \\
 	& \left. - \sum_{l \in \mathcal{L}} \frac{1}{\epsilon}w_l I_{l}^*(a(k),c(k),d) \right] \notag \\
 	\leq & B_1 + \sum_{l \in \mathcal{L}} d_l \mu_{l}(\Delta) \label{def_mu_delta} \\
 	& - E \left[ \sum_{l \in \mathcal{L}} \left( \frac{1}{\epsilon}w_l + d_l \right) I_{l}^*(a(k),c(k),d) \right. \notag \\
 	& \left. - \sum_{l \in \mathcal{L}} \frac{1}{\epsilon}w_l I_{l}^*(a(k),c(k),d) \right] \notag
\end{align}
where (\ref{def_I_l}) follows from the definition of $I_{l}^*(a(k),c(k),d)$, (\ref{def_mu_delta}) follows from (\ref{inelastic_feasibility}), and
\begin{equation*}
	B_1 = \frac{1}{2} \sum_{l \in \mathcal{L}} ( \lambda_{l}^2 + \sigma_{l}^2 ) [ 1+(1-p_l)^2 ] + \lambda_{l} p_l (1-p_l).
\end{equation*}

Given the definition of $\mathcal{C}$, we have that $\mu(\Delta) \in \mathcal{C}/(1+\Delta)$ implies that there exist $\bar{\mu}(a,c) \in \mathcal{C}(a,c)$ for all $a$, $c$ and $(1+\Delta)\mu_{l}(\Delta) = E[ \bar{\mu}_{l}(a,c) ]$ for all $l \in \mathcal{L}$. For the rest of the proof we define $\bar{\mu}(a(k),c(k))$ to be such set of values associated to $(1+\Delta)\mu(\Delta)$. Hence:
\begin{align}
	E & \left[ V(d(k+1)) | d(k)=d \right] - V(d) \notag \\
 	\leq & B_1 + \sum_{l \in \mathcal{L}} d_l \mu_{l} (\Delta) \notag \\
 	& - E \left[ \sum_{l \in \mathcal{L}} \left( \frac{1}{\epsilon}w_l + d_l \right) I_{l}^*(a(k),c(k),d) \right] \notag \\
 	& + \sum_{l \in \mathcal{L}} \frac{1}{\epsilon}w_l E \left[ I_{l}^*(a(k),c(k),d) \right] \notag \\
 	\leq & B_1 + \sum_{l \in \mathcal{L}} d_l \mu_{l}(\Delta) \notag \\
 	& - E \left[ \sum_{l \in \mathcal{L}} \left( \frac{1}{\epsilon}w_l + d_l \right) \bar{\mu}_{l}(a(k),c(k)) \right] \label{fact_consequence} \\
 	& + \sum_{l \in \mathcal{L}} \frac{1}{\epsilon}w_l E \left[ I_{l}^*(a(k),c(k),d) \right] \notag \\
 	= & B_1 + \sum_{l \in \mathcal{L}} d_l \mu_{l}(\Delta) - \sum_{l \in \mathcal{L}} \left( \frac{1}{\epsilon}w_l + d_l \right) (1+\Delta)\mu_{l}(\Delta) \notag \\
 	& + \sum_{l \in \mathcal{L}} \frac{1}{\epsilon}w_l E \left[ I_{l}^*(a(k),c(k),d) \right] \notag \\
 	= & B_1 - \Delta \sum_{l \in \mathcal{L}} d_l \mu_{l}(\Delta) \notag \\
 	& - \frac{1}{\epsilon} \sum_{l \in \mathcal{L}} \left\{ w_l (1+\Delta) \mu_{l}(\Delta) - w_l E \left[ I_{l}^*(a(k),c(k),d) \right] \right\}, \notag
\end{align}
where \eqref{fact_consequence} follows from the fact that $\bar{\mu}(a(k),c(k)) \in \mathcal{C}(a(k),c(k))$ and Fact \ref{fact_online_opt_kc}. Therefore:
\begin{align*}
	E & \left[ V(d(k+1)) | d(k)=d \right] - V(d) \leq B_1 - B_2 \sum_{l \in \mathcal{L}} d_l \\
 	& - \frac{1}{\epsilon} \sum_{l \in \mathcal{L}} \left\{ w_l (1+\Delta) \mu_{l}(\Delta) - w_l E \left[ I_{l}^*(a(k),c(k),d) \right] \right\} \\
\end{align*}
where
\begin{equation*}
	B_2 = \Delta \min_{l \in \mathcal{L}} \left\{ \mu_{l}(\Delta) \right\}.
\end{equation*}
\end{proof}

\section{Proof of Theorem \ref{optimality_online_kc}}

From Lemma \ref{bound_expected_drift_kc} we know that
\begin{align*}
	\frac{1}{\epsilon} & \sum_{l \in \mathcal{L}}  \left\{ w_l \mu^*_{l} - w_l E \left[ I_{l}^*(a(k),c(k),d) \right] \right\} \\
 	\leq & B_1 - B_5 \sum_{l \in \mathcal{L}} d_l + V(d) - E \left[ V(d(k+1)) | d(k)=d \right] \\
 	\leq & B_1 + V(d) - E \left[ V(d(k+1)) | d(k)=d \right]
\end{align*}
since $B_5 \sum_{l \in \mathcal{L}} d_l \geq 0$. Taking expectations:
\begin{align*}
	\frac{1}{\epsilon} & E \left[ \sum_{l \in \mathcal{L}} \left\{ w_l \mu^*_{l} - w_l I_{l}^*(a(k),c(k),d(k)) \right\} \right] \\
	\leq & B_1 - E\left[ V(d(k+1)) \right] + E\left[ V(d(k)) \right].
\end{align*}

Adding the terms for $k=\{ 1, \ldots, K \}$ and dividing by $K$ we get:
\begin{align}
	\frac{1}{\epsilon} & E \left[ \sum_{l \in \mathcal{L}} w_l \mu^*_{l} - w_l \frac{1}{K} \sum_{k=1}^K I_{l}^*(a(k),c(k),d(k)) \right] \notag \\
	& \leq B_1 - \frac{E\left[ V(d(K+1)) \right]}{K} + \frac{E\left[ V(d(1)) \right]}{K} \notag \\
	& \leq B_1 + \frac{E\left[ V(d(1)) \right]}{K} \label{positive_mean_kc}
\end{align}
where (\ref{positive_mean_kc}) follows from the fact that the Lyapunov function V is non-negative. Assuming $E\left[ V(d(1)) \right] < \infty$ we get the following limit expression:
\begin{align*}
	\limsup_{K \rightarrow \infty} & E \left[ \sum_{l \in \mathcal{L}} w_l \mu^*_{l} - w_l \frac{1}{K} \sum_{k=1}^K I_{l}^*(a(k),c(k),d(k)) \right] \leq B \epsilon
\end{align*}
where $B=B_1$.\hfill $\blacksquare$

\section{Proof of Theorem \ref{optimality_greedy}}

The proof will use dynamic programming arguments: we will first note that if there is only one time slot remaining, the optimal decision is to schedule the link with the highest weight among the links that have a packet that remains to be transmitted, and then using induction we will prove that the best decision in any time slot is to schedule a backlogged link with the highest weight.

For simplicity in notation, define $\pi_l = \frac{1}{\epsilon} w_l + d_l $ for all $l \in \mathcal{L}$. Also define
\begin{equation*}
	U_{\rho}(\mathcal{L}, j) = \sum\limits_{l \in \mathcal{L}} \sum\limits_{c} \sum_{ t=T-j+1 }^{T} \pi_l c_{lt} s_{lt}(\rho, a, c) Pr(c)
\end{equation*}
to be the expected utility of policy $\rho$ when there are $j$ time slots remaining and the set of links that have a packet that needs to be transmitted is given by $\mathcal{L}$. Furthermore, define
\begin{equation*}
	U_{\rho}(\mathcal{L}, 0) = 0.
\end{equation*}
Finally, we will denote the greedy policy by $g$.

If there is only one time slot remaining, the optimal decision is to schedule one of the links that has the largest weight $\pi_l \bar{c}_l$ among the links that have a waiting packet, since this maximizes the expected utility. So the optimal decision	 is to use the greedy scheduler in the last time slot. Using induction, we assume that when there are $j$ time slots remaining, it is optimal to use the greedy scheduler. We will prove that if there are $j+1$ time slots remaining then it is also optimal to use the greedy scheduler.

When there are $j+1$ time slots, we need to determine which link to schedule in the first slot, and then use the greedy scheduler for the remaining $j$ time slots. Assume that the set of links that have a packet waiting to be transmitted is given by $\mathcal{L}$. If we schedule the link with the largest weight $\pi_l \bar{c}_l$, then the expected utility is given by
\begin{equation*}
	U_{g}(\mathcal{L}, j+1) = \pi_{ l^* } \bar{c}_{ l^* } + (1-\bar{c}_{ l^* }) U_g(\mathcal{L}, j) + \bar{c}_{ l^* } U_g(\mathcal{L} \setminus \{ l^* \}, j)
\end{equation*}
where
\begin{equation*}
	l^* \in \mathop{ \arg\max }_{ l \in \mathcal{L} } \pi_l \bar{c}_l.
\end{equation*}
If we decide to schedule link
\begin{equation*}
	\tilde{l} \notin \mathop{ \arg\max }_{ l \in \mathcal{L} } \pi_l \bar{c}_l,
\end{equation*}
then the expected utility is given by
\begin{align*}
	U_{\tilde{\rho}}(\mathcal{L}, j+1) = & \pi_{ \tilde{l} } \bar{c}_{ \tilde{l} } + (1-\bar{c}_{ \tilde{l} }) U_g(\mathcal{L}, j) + \bar{c}_{ \tilde{l} } U_g(\mathcal{L} \setminus \{ \tilde{l} \}, j)\\
	= & \pi_{ \tilde{l} } \bar{c}_{ \tilde{l} } + \pi_{ l^* } \bar{c}_{ l^* } + ( 1-\bar{c}_{l^*} ) ( 1-\bar{c}_{\tilde{l}} ) U_g(\mathcal{L}, j-1) \\
	   & + ( 1-\bar{c}_{l^*} ) \bar{c}_{\tilde{l}} U_g(\mathcal{L} \setminus \{ \tilde{l} \}, j-1) \\
	   & + \bar{c}_{l^*} ( 1-\bar{c}_{\tilde{l}} ) U_g(\mathcal{L} \setminus \{ l^* \}, j-1) \\
	   & + \bar{c}_{l^*} \bar{c}_{\tilde{l}} U_g(\mathcal{L}  \setminus \{ l^*, \tilde{l} \}, j-1) \\
	= & \pi_{ l^* } \bar{c}_{ l^* } + ( 1-\bar{c}_{l^*} ) \left[ \pi_{ \tilde{l} } \bar{c}_{ \tilde{l} } + ( 1-\bar{c}_{\tilde{l}} ) U_g(\mathcal{L}, j-1) \right. \\
	   & \left. + \bar{c}_{\tilde{l}} U_g(\mathcal{L} \setminus \{ \tilde{l} \}, j-1) \right] \\
	   & + \bar{c}_{l^*} \left[ \pi_{ \tilde{l} } \bar{c}_{ \tilde{l} } + ( 1-\bar{c}_{\tilde{l}} ) U_g(\mathcal{L} \setminus \{ l^* \}, j-1) \right. \\
	   & \left. + \bar{c}_{\tilde{l}} U_g(\mathcal{L} \setminus \{ l^* , \tilde{l} \}, j-1) \right]. \\
\end{align*}

So in order to prove that $U_{g}(\mathcal{L}, j+1) \geq U_{\tilde{\rho}}(\mathcal{L}, j+1)$ it suffices to show that
\begin{align}
\label{condition1_uc_ps}
	U_g(\mathcal{L}, j) \geq & \pi_{ \tilde{l} } \bar{c}_{ \tilde{l} } + ( 1-\bar{c}_{\tilde{l}} ) U_g(\mathcal{L}, j-1) \\
	& + \bar{c}_{\tilde{l}} U_g(\mathcal{L} \setminus \{ \tilde{l} \}, j-1) \notag
\end{align}
and
\begin{align}
\label{condition2_uc_ps}
	U_g(\mathcal{L} \setminus \{ l^* \}, j) \geq & \pi_{ \tilde{l} } \bar{c}_{ \tilde{l} } + ( 1-\bar{c}_{\tilde{l}} ) U_g(\mathcal{L} \setminus \{ l^* \}, j-1) \\
	& + \bar{c}_{\tilde{l}} U_g(\mathcal{L} \setminus \{ l^* , \tilde{l} \}, j-1). \notag
\end{align}
From the assumption that the greedy scheduler is optimal when there are $j$ slots remaining, it is clear that \eqref{condition1_uc_ps} and \eqref{condition2_uc_ps} are true, so the greedy scheduler is indeed optimal when there are $j+1$ slots remaining.
\hfill $\blacksquare$

%
%
%
%
%


%

\bibliographystyle{IEEEtran}
\bibliography{IEEEabrv,../bibtex/delay}

\end{document}